\def\BState{\State\hskip-\ALG@thistlm}
\newcommand{\ds}{\displaystyle}
\newcommand{\X}{\mathcal{X}}
\newcommand{\bb}[1]{\mathbf{#1}}
\newcommand{\bs}[1]{\boldsymbol{#1}}
\newcommand\numberthis{\addtocounter{equation}{1}\tag{\theequation}}
\newtheorem{theorem}{Theorem}
\newtheorem{defi}{Definition}
\newtheorem{propo}{Proposition}
\newtheorem{lemma}{Lemma}
\newtheorem{corollary}{Corollary}
\theoremstyle{remark}
\newtheorem{remark}{Remark}
\newtheorem{assum}{Assumption}
\author{
Sanket Agrawal\\
Department of Statistics\\
University of Warwick
\and
Dootika Vats\\
Department of Mathematics and Statistics\\
Indian Institute of Technology Kanpur\\
\texttt{dootika@iitk.ac.in}
\and
Krzysztof {\L}atuszy{\'n}ski\\
Department of Statistics\\
University of Warwick
\and
Gareth O. Roberts\\
Department of Statistics\\
University of Warwick
}
\title{Optimal Scaling of MCMC Beyond Metropolis}
\date{\today}
\begin{document}

\maketitle
 
\doublespacing
\begin{abstract}
    The problem of optimally scaling the proposal distribution in a Markov chain Monte Carlo algorithm is critical to the quality of the generated samples. Much work has gone into obtaining such results for various Metropolis-Hastings (MH) algorithms. Recently, acceptance probabilities other than MH are being employed in problems with intractable target distributions. There is little resource available on tuning the Gaussian proposal distributions for this situation.  We obtain optimal scaling results for a general class of acceptance functions, which includes Barker's and Lazy-MH.      In particular, optimal values for the Barker's algorithm are derived and found to be significantly different from that obtained for the MH algorithm. Our theoretical conclusions are supported by numerical simulations indicating that when the optimal proposal variance is unknown, tuning to the optimal acceptance probability remains an effective strategy.
\end{abstract}
 

\section{Introduction} 
\label{sec:intro}

    Over the past few decades, Markov chain Monte Carlo (MCMC) methods have  become an abundantly popular computational tool, enabling practitioners to conveniently sample from complicated target distributions \citep[see][]{brooks2011,meyn2012, robert2013}.
    This popularity can be attributed to easy-to-implement accept-reject based MCMC algorithms for target densities available only up to a proportionality constant. Here, draws from a proposal kernel are accepted with a certain \textit{acceptance probability}. The choice of the acceptance probability and the proposal kernel can yield varying performances of the MCMC samplers.

    Unarguably, the most popular acceptance probability is Metropolis-Hastings (MH) of \cite{metropolis1953,hastings1970} due to its acknowledged optimality \citep{peskun1973,billera2001}. Efficient implementation of the MH algorithm requires tuning within the chosen family of proposal kernels. For the MH acceptance function, various optimal scaling results have been obtained under assumptions on the proposal and the target distribution. This includes the works of \cite{roberts1997weak,roberts1998,roberts2001,neal:optimal:2006,bedard2008,sher:rob:2009,zanella2017dirichlet,yang2020}, among others.


    Despite the popularity of the MH acceptance function, other acceptance probabilities remain practically and theoretically relevant.  Recently, the Barker's acceptance rule \citep{barker1965} and the Lazy-MH \citep[see][]{latuszynski2013} have found use in Bernoulli factory based MCMC algorithms for intractable posteriors \citep{herbei:berliner:2014,gonccalves2017a, gonccalves2017b, smith:2018,vats2020}. Barker's acceptance function has also proven to be optimal with respect to search efficiency \citep{menezes2014} and it guarantees variance improvements for waste-recycled Monte Carlo estimators \citep{del:jour:2009}. Further, a class of acceptance probabilities from  \cite{bedard2008} has been of independent theoretical interest.  We also introduce a new family of \textit{generalized Barker's} acceptance probabilities and present a Bernoulli factory for use in problems with intractable posteriors. 


    To the best of our knowledge, there are no theoretical and practical guidelines concerning optimal scaling  outside of MH and its variants (although see \citealp{sherlock2015} for a discussion on delayed acceptance MH and \citealp{sherlock2015efficiency,doucet2015efficient,schmon2021large} for analyses pertaining to pseudo-marginal MCMC). We obtain optimal scaling results for a large class of acceptance functions; Barker's, Lazy-MH, and MH are members of this class. 

%
%
    
    We restrict our attention to the framework of \cite{roberts1997weak} with a random walk Gaussian proposal kernel and a $d$-dimensional decomposable target distribution.   Similar to MH, our   general class of acceptance functions require the proposal variance to be scaled by $1/d$. We find that, typically, for lower acceptance functions, the optimal proposal variance is larger than the optimal proposal variance for MH, implying the need for larger jumps. For the Barker's acceptance rule, the asymptotically optimal acceptance rate (AOAR) is approximately $0.158$, in comparison to the  $0.234$ rate for MH \citep{roberts1997weak}. Similar AOARs are presented for other acceptances. 

   In Section \ref{sec:acc_func} we  describe our class of acceptance probabilities with the main results  presented in Section \ref{sec:main}. Asymptotically optimal acceptance rate for Barker's and other functions are obtained in Section \ref{sec:barker}. In Section~\ref{sec:sims} we present numerical results  under settings that both do and do not comply with our assumptions.  A trailing discussion on the scaling factor for different acceptance functions and generalizations of our results is provided in the last section. All proofs are in the appendices.

\section{Class of acceptance functions}
\label{sec:acc_func}
    Let $\bs{\pi}$ be the target distribution, with corresponding Lebesgue density $\pi$ and support $\mathcal{X}$ so that an MCMC algorithm aims to generate a $\bs{\pi}$-ergodic Markov chain, $\{X_n\}$. Let $Q$ be a Markov kernel with an associated Lebesgue density $q(x, \cdot)$ for each $x \in \mathcal{X}$. We assume throughout that $q$ is symmetric. Further, let the acceptance probability function be $\alpha(x, y): \mathcal{X} \times \mathcal{X} \to [0,1]$.
    Starting from an $X_0 \in \mathcal{X}$, at the $n$th step, a typical accept-reject MCMC algorithm proposes $y \sim q(X_{n-1}, \cdot)$.  The proposed value is accepted with probability $\alpha(X_{n-1}, y)$, otherwise it is rejected, implying that $X_n = X_{n-1}$. 
    The acceptance function $\alpha$ is responsible for guaranteeing  $\bs{\pi}$-reversibility and thus $\bs{\pi}$-invariance of the Markov chain.
        
    Let $a \wedge b$ denote $\min(a, b)$, and, $s(x,y) = \pi(y) / \pi(x)$. We define $\mathcal{A}$, the class of acceptance functions for which our optimal scaling results will hold, as follows:
    \begin{defi}
    \label{assum:acceptance}
        Each $\alpha \in \mathcal{A}$ is a map $\alpha(x,y): \mathcal{X} \times \mathcal{X} \to [0,1]$ and  for every $\alpha \in \mathcal{A},$ there exists a \textit{balancing} function, $g_{\alpha}: [0, \infty) \to [0, 1]$, such that,
        \begin{align}
            & \alpha(x, y) = g_{\alpha}(s(x, y)), \ \  x, y \in \mathcal{X}, \label{eq:gfunc}\\
            & g_{\alpha}(z) = zg_{\alpha}\left( \dfrac{1}{z}\right), \ \ 0 \le z < \infty, \label{eq:detailed_bal}\\
            & g_{\alpha}(e^z), z \in \mathbb{R} \text{ is Lipschitz continuous.} \label{eq:g_lipschitz}
        \end{align}
    \end{defi}

    Properties~\eqref{eq:gfunc} and \eqref{eq:detailed_bal} are standard and easy to verify, with \eqref{eq:gfunc} ensuring intractable constants in $\pi$ cancel away and \eqref{eq:detailed_bal} ensuring $\bs{\pi}$-reversibility.
    Property~\eqref{eq:g_lipschitz} is not required for $\alpha$ to be a valid acceptance function, however, we need it  for our optimal scaling results (to establish Lemma \ref{lemm:uni2}) and holds true for all common acceptance probabilities. Moreover, each $\alpha \in \mathcal{A}$ can be identified by the corresponding $g_{\alpha}$ and we will use $\alpha$ and $g_{\alpha}$ interchangeably.
    
    If $g_{\text{MH}}$ denotes the balancing function for MH acceptance function $(\alpha_{\text{MH}})$, then,
    \begin{equation}
    \label{eq:mh}
        g_{\text{MH}}(z) = 1 \wedge z, \ \ \ z \ge 0.
    \end{equation}
    It is easy to see that $\alpha_{\text{MH}} \in \mathcal{A}$. The Lazy-MH $(\alpha_{\text{L}})$ acceptance of \cite{latuszynski2013,herbei:berliner:2014} also belongs to $\mathcal{A}$. For a fixed $\epsilon \in [0, 1]$, it is defined using,
    \begin{equation}
    \label{eq:lazy}
        g_{\text{L}}(z) = (1 - \epsilon)(1 \wedge z),  \ \ \ z \ge 0\,.
    \end{equation} 

    The Barker's acceptance function is  $\alpha_{\text{B}}(x, y) = g_{\text{B}}(s(x, y))$ for all $x, y \in \X$ where,
    \begin{equation}
    \label{eq:barker}
        g_{\text{B}}(z) = \dfrac{z}{1+z}, \ \ \ z \ge 0.
    \end{equation}
    Then, \eqref{eq:detailed_bal}  follows immediately. For differentiable functions, property \eqref{eq:g_lipschitz}, i.e. Lipschitz continuity of $g_{\alpha}(e^z)$ can be verified by bounding the first derivative. In particular, we have $|g'_{\text{B}}(e^z)| \le 1$ for all $z \in \mathbb{R}$ and hence, $\alpha_{\text{B}} \in \mathcal{A}$. Due to \cite{peskun1973}, it is well known that in the context of Monte Carlo variability of ergodic averages, MH is superior to Barker's. Even so, the Barker's acceptance function has had a recent resurgence aided by its use in Bernoulli factory MCMC algorithms for Bayesian intractable posteriors where MH algorithms are not implementable. 

    We present a generalization of \eqref{eq:barker}; for $r \geq 1$ define
    \begin{equation*}
    g_r^{\text{R}}(z) = \begin{cases}
      \ds \frac{z(z^r - 1)}{z^{r+1} - 1}, & z \neq 1 \\
      \ds \frac{r}{r + 1}, & z  = 1\,.
    \end{cases}
  \end{equation*}
For $r \in \mathbb{N}$, the above can be rewritten as:
    \begin{equation}
    \label{eq:bark_seq}
        g_r^{\text{R}}(z) = \frac{z + \dots + z^r}{1 + z + \dots + z^r}, \ \quad z \ge 0, \,   r \in \mathbb{N}.
    \end{equation} 
    If $\alpha_r^{\text{R}}$ is the associated acceptance function, then, $\alpha_r^{\text{R}} \in \mathcal{A}$ for all $r \geq 1$. Moreover, $g_1^{\text{R}} \equiv g_{\text{B}}$ and $g_r^{\text{R}} \uparrow g_{\text{MH}}$ as $r \to \infty$. For $r \in \mathbb{N}$, we present a natural Bernoulli factory in the spirit of \cite{gonccalves2017b} that generates events of probability $\alpha_r^{\text{R}}$ without explicitly evaluating it; see Appendix~\ref{sec:BF}. An alternative approach would be to follow the general sampling algorithm of \cite{morina2019bernoulli} for rational functions.

    Let $\bs{\Phi}(\cdot)$ be the standard normal distribution function. For a theoretical exposition, \cite{bedard2008} defines the following acceptance probability for some $h > 0$:
    \begin{equation}
    \label{eq:bedard_seq}
        g_h^{\text{H}}(z) =  \bs{\Phi} \left( \frac{\log z - h/2}{\sqrt{h}}\right) + z \cdot \bs{\Phi} \left( \frac{-\log z - h/2}{\sqrt{h}}\right), \ \ \ z \ge 0. 
    \end{equation}
    For each $h > 0$, $\alpha_h^{\text{H}} \in \mathcal{A}$ and observe that as $h \to 0$, $g_h^{\text{H}} \to g_{\text{MH}}$ and as $h \to \infty, g_h^{\text{H}} \to 0$, i.e. the chain never moves. Similar examples can be constructed by considering other well behaved distribution functions in place of $\bs{\Phi}$. Lastly, it is easy to see that $\mathcal{A}$ is convex. Thus, it also includes situations when each update of the algorithm randomly chooses an acceptance probability. Moreover, as evidenced in \eqref{eq:lazy}, $\mathcal{A}$ is also closed under scalar multiplication as long as the resulting function lies in $[0, 1]$.

\section{Main theorem}  
\label{sec:main}

    Let $f$ be a 1-dimensional density function and consider a sequence of target distributions $\{ \bs{\pi}_d \}$ such that for each $d$, the joint density is
    \begin{equation*}
    \label{eq:target}
        \pi_d(\bs{x}^d) = \prod_{i=1}^df(x_i^d), \ \ \ \ \ \bs{x}^d = (x_1^d, \dots, x_d^d)^T\in \mathbb{R}^d.
    \end{equation*}
    
    \begin{assum}
        \label{assum:f}
    Density $f$ is positive and in $C^2$--the class of all real-valued functions with continuous second order derivatives. Further, $f'/f$ is Lipschitz and the following moment conditions hold,
    \begin{equation}
    \label{eq:A1A2}
        \mathbb{E}_f\left[ \left(\frac{f'(X)}{f(X)}\right)^8\right] < \infty, \hspace{20pt} \mathbb{E}_f\left[ \left(\frac{f''(X)}{f(X)}\right)^4\right] < \infty.            
    \end{equation}        
    \end{assum}

    Consider the sequence of Gaussian proposal kernels $\{ Q_d(\bs{x}^d, \cdot) \}$ with associated density sequence $\{q_d\}$, so that $Q_d(\bs{x}^d, \cdot) = N(\bs{x}^d, \sigma^2_d\bb{I}_d)$ where for some constant $l \in \mathbb{R}^{+}$,
    \[
        \sigma^2_d = l^2/(d-1)\,.
    \]
     %
    The proposal $Q_d$ is used to generate a $d-$dimensional Markov chain, $\bs{X}^d = \{\bs{X}^d_n, n \ge 0\}$, following the accept-reject mechanism with acceptance function $\alpha$.
    Under these conditions and with  $\alpha = \alpha_{\text{MH}}$, \cite{roberts1997weak} established weak convergence to an appropriate Langevin diffusion for the sequence of 1-dimensional stochastic processes, constructed from the first component of these Markov chains. Since the coordinates are independent and identically distributed, this limit informs the limiting behaviour of the full Markov chain in high-dimensions. In what follows, we extend their results to the class of acceptance functions, $\mathcal{A} $, as defined in Definition~\ref{assum:acceptance}.

Let $\{\bs{Z}^d, d > 1\}$ be a sequence of processes constructed by speeding up the Markov chains by a factor of $d$ as follows,
    \begin{equation*}
    \label{eq:zt}
        \bs{Z}^d_t = \bs{X}^d_{[dt]} = (X^d_{[dt],1}, X^d_{[dt],2}, \dots, X^d_{[dt],d})^T; \ \ \ t > 0.
    \end{equation*}
    Suppose $\{\eta_d: \mathbb{R}^d \to \mathbb{R}\}$ is a sequence of projection maps such that $\eta_d(\bs{x}^d) = x_1^d$. Define a new sequence of $1$-dimensional processes $\{U^d, d > 1\}$ as follows,
    \begin{equation*}
    \label{eq:Ut}
        U^d_t := \eta_d \circ \bs{Z}^d_t = {X}^d_{[dt], 1}; \ \ \ t > 0.
    \end{equation*}
    Under stationarity, we show that $\{U^d, d > 1\}$ weakly converges  \cite[in the Skorokhod topology, see][]{ethier2009} to a Markovian limit $U$. We denote weak convergence of processes in the Skorokhod topology by ``$\Rightarrow$" and standard Brownian motion at time $t$ by $B_t$. The proofs are in the appendices. 
    \begin{theorem}
    \label{thm:main}

        Let $\{\bs{X}^d, d \ge 1\}$ be the sequence of $\bs{\pi}_d$-invariant Markov chains constructed using acceptance function $\alpha$ and proposal $Q_d$ such that $\bs{X}^d_0 \sim \bs{\pi}_d$. Further, suppose $\alpha \in \mathcal{A}$ 
        and $\bs{\pi}_d$ satisfies Assumption~\ref{assum:f}. 
        %
%
        Then, $U^d \Rightarrow U$, where $U$ is a diffusion process that satisfies the Langevin stochastic differential equation,
        \begin{equation*}
            dU_t = (h_{\alpha}(l))^{1/2}dB_t + h_{\alpha}(l)\frac{f'(U_t)}{2f(U_t)}dt,
        \end{equation*}
        with $h_{\alpha}(l) = l^2 M_{\alpha}(l)$, where, 
        \begin{equation}
            \label{eq:mal}
            M_{\alpha}(l) = \int_{\mathbb{R}} g_{\alpha}(e^b) \frac{1}{\sqrt{2\pi l^2I}} \exp\left \{ \frac{-(b + l^2I/2)^2}{2l^2I}\right\} db,
        \end{equation}
        and, 
    \[
        I = \mathbb{E}_{f}\left[ \left(\frac{f'(X)}{f(X)}\right)^2 \right].
    \]
    \end{theorem}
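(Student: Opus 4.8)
The plan is to adapt the generator-convergence argument of \cite{roberts1997weak} to a non-explicit acceptance function, localising the analysis at the two points where the absence of a closed form for $g_{\alpha}$ matters. Although $\bs{X}^d$ is Markov, its first-coordinate projection $U^d$ is not, so I would work with the generator $G_d$ of the speeded-up chain $\bs{Z}^d$ evaluated on functions $V=V(x_1)$ of the first coordinate only,
\begin{equation*}
    G_d V(\bs{x}) = d \int_{\mathbb{R}^d} ( V(y_1) - V(x_1) )\, g_{\alpha}( e^{B_d(\bs{x},\bs{y})} )\, q_d(\bs{x},\bs{y})\, d\bs{y}, \qquad B_d(\bs{x},\bs{y}) = \sum_{i=1}^{d} \log \frac{f(y_i)}{f(x_i)}.
\end{equation*}
The goal is to show $\E_{\bs{\pi}_d}| G_d V(\bs{X}) - \mathcal{L}V(X_1) | \to 0$ for $V$ in a core (e.g.\ compactly supported smooth functions) of the limit generator $\mathcal{L}V = h_{\alpha}(l)[ \tfrac12 V'' + \tfrac12 (f'/f) V' ]$; together with well-posedness of the limiting martingale problem this delivers $U^d \Rightarrow U$ through the standard machinery in \cite{ethier2009}. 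Well-posedness is automatic, since the limit has constant volatility and drift $\tfrac12 h_{\alpha}(l)(f'/f)$, which is Lipschitz by Assumption~\ref{assum:f}.

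The distributional engine is the behaviour of the log-acceptance ratio. Under stationarity, a Lyapunov central limit theorem applied to the independent summands $\log(f(Y_i)/f(X_i))$ gives $B_d \Rightarrow W \sim N(-l^2 I/2,\, l^2 I)$, with $B_d$ asymptotically independent of the first-coordinate increment $Y_1 - X_1 = \sigma_d Z_1$. Taylor-expanding each summand, the first-order parts $(f'/f)(X_i)\sigma_d Z_i$ accumulate the variance $l^2 I$ while the expected second-order parts accumulate the mean $-l^2 I/2$; the moment bounds \eqref{eq:A1A2} are precisely what control the Taylor remainders and the third-order terms of $V(y_1)-V(x_1)$ (which contribute $O(d\sigma_d^3)=O(d^{-1/2})$) uniformly in $d$ via H\"older's inequality. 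Because $g_{\alpha}\in[0,1]$ and $b\mapsto g_{\alpha}(e^b)$ is continuous by \eqref{eq:g_lipschitz}, bounded convergence then yields $\E[g_{\alpha}(e^{B_d})]\to \E[g_{\alpha}(e^{W})] = M_{\alpha}(l)$, which is exactly the coefficient appearing in \eqref{eq:mal}.

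For the diffusion coefficient this is nearly the whole story: the term $\tfrac{d}{2} V''(x_1)\,\E[(Y_1-X_1)^2 g_{\alpha}(e^{B_d}) \mid \bs{x}]$ converges to $\tfrac12 l^2 M_{\alpha}(l) V''(x_1) = \tfrac12 h_{\alpha}(l) V''(x_1)$, using $d\sigma_d^2\to l^2$, the decoupling of $Z_1$ from the bulk contribution $R_d := B_d - \log(f(Y_1)/f(X_1))$, and the limit above; here \eqref{eq:g_lipschitz} lets me replace $g_{\alpha}(e^{B_d})$ by $g_{\alpha}(e^{R_d})$ at cost $|g_{\alpha}(e^{B_d})-g_{\alpha}(e^{R_d})|\le L_g|\log(f(Y_1)/f(X_1))| = O(\sigma_d)$, where $L_g$ is the Lipschitz constant. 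The \emph{drift} coefficient is the crux and the step I expect to be hardest: since $g_{\alpha}$ need not be differentiable (e.g.\ the kink of $g_{\text{MH}}$), there is no derivative available to pull the leading constant out of $\E[(Y_1-X_1)\,g_{\alpha}(e^{B_d})\mid \bs{x}]$, so the elementary RGG computation for $1\wedge e^b$ does not transfer.

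The resolution is to couple a Gaussian (Stein) integration by parts with the balancing condition \eqref{eq:detailed_bal}. Writing $G(b)=g_{\alpha}(e^b)$, condition \eqref{eq:g_lipschitz} makes $G$ Lipschitz, hence absolutely continuous with a bounded a.e.\ derivative, which legitimises Stein's identity in $Z_1$ and, after letting $\sigma_d\to 0$ and $R_d\Rightarrow W$, reduces the drift part of $G_d V$ to $l^2 V'(x_1)(f'/f)(x_1)\,\E[G'(W)]$. The identity that closes the argument is
\begin{equation*}
    \E[W\,G(W)] = 0, \qquad \E[G'(W)] = \tfrac12\,\E[G(W)] = \tfrac12 M_{\alpha}(l), \qquad W\sim N(-l^2 I/2,\, l^2 I),
\end{equation*}
where $\E[W G(W)]=0$ follows from $G(b)=e^b G(-b)$ (which is \eqref{eq:detailed_bal}) together with the reweighting relation $e^b\phi(b)=\phi(-b)$ for the $N(-l^2I/2, l^2I)$ density $\phi$, the latter being exactly the mean-equals-minus-half-variance structure produced by the central limit theorem. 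This forces the drift coefficient to equal $\tfrac12 h_{\alpha}(l)(f'/f)$, matching $\mathcal{L}$ and exhibiting the limit as the reversible Langevin diffusion. The remaining technical care lies in justifying the interchange of the $\sigma_d\to 0$ limit with the expectations and in handling $G'$, which exists only almost everywhere; since $W$ has a density and $G'$ is bounded, dominated convergence applies, but this is where the bulk of the uniform-integrability bookkeeping (again leaning on \eqref{eq:A1A2}) must be done.
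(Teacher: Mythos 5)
Your proposal follows essentially the same route as the paper's proof: generator convergence for functions of the first coordinate via the Ethier--Kurtz machinery, Gaussianization of the log-acceptance ratio controlled by the Lipschitz condition \eqref{eq:g_lipschitz}, the moment bounds \eqref{eq:A1A2}, and extraction of the drift constant through the symmetry identity $\mathbb{E}[W\,G(W)]=0$ for $W\sim N(-l^2I/2,\,l^2I)$, which is exactly the paper's Proposition~\ref{prop:odd}, proved there by the same balancing-plus-reweighting (odd-function) argument you sketch. The one substantive difference is where the integration by parts lands: you apply Stein's identity to the a.e.\ derivative $G'$ of the Lipschitz function $G(b)=g_\alpha(e^b)$, which then requires convergence of $\mathbb{E}[G'(\cdot)]$ along the approximating laws even though $G'$ may be discontinuous (e.g.\ $g_{\text{MH}}$) --- this is fine, but only after the bulk has been replaced by its exactly Gaussian linearization so that the laws converge in total variation, not merely weakly; the paper sidesteps this entirely by differentiating the exact Gaussian integral $M_{d,\alpha}(\epsilon)$ in its location parameter $\epsilon$, so that the derivative falls on the smooth Gaussian density and no derivative of $g_\alpha$ is ever taken.
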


    \begin{remark}
    Since $\alpha_{\text{MH}} \in \mathcal{A}$, our result aligns with \cite{roberts1997weak} since
    \[
        M_{\text{MH}} (l) = \int_{\mathbb{R}} g_{\text{MH}}(e^b) \frac{1}{\sqrt{2\pi l^2I}} \exp\left \{ \frac{-(b + l^2I/2)^2}{2l^2I}\right\} db = 2 \Phi\left(-\frac{l\sqrt{I}}{2} \right)\,.
    \]
    \end{remark}
    \begin{remark}
        For symmetric proposals, Definition~\ref{assum:acceptance} requires $\alpha$ to be a function of only the ratio of the target densities at the two contested points. Thus, the result is not applicable to acceptances in \cite{mira2001,banterle2019accelerating, vats2020}. 
    \end{remark}

    In Theorem~\ref{thm:main}, $h_{\alpha}(l)$ is the speed measure of the limiting diffusion process and so the optimal choice of $l$ is $l^*$ such that
    \[
        l^* = \underset{l}{\mathrm{arg\,max}}\, h_{\alpha}(l).
\]
       Denote the average acceptance probability by
    \[
        \alpha_{d}(l) := \mathbb{E}_{\bs{\pi}_d, Q_d} \left[ \alpha(\bs{X}^d, \bs{Y}^d) \right] = \int \int \pi(\bs{x}^d) \ \alpha(\bs{x}^d, \bs{y}^d) \ q_d(\bs{x}^d, \bs{y}^d) \ d\bs{x}^d \ d\bs{y}^d\,,
    \]
    and the asymptotic acceptance probability as $\alpha(l) := \lim_{d \to \infty} \alpha_{d}(l)$. The dependence on $l$ is through the variance of proposal kernel.  We then have the following corollary.
    \begin{corollary} \label{corr:1} Under the setting of Theorem \ref{thm:main}, we obtain $\alpha(l) = M_{\alpha}(l)$ and the asymptotically optimal acceptance probability is  $M_{\alpha}(l^*)$.
    \end{corollary}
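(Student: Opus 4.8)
The plan is to reduce both claims to the weak convergence of the log acceptance ratio under stationarity, and then pass to the limit using the regularity that Assumption~\ref{assum:acceptance} places on $g_\alpha$. Writing $s_d := s(\bs{X}^d, \bs{Y}^d) = \pi_d(\bs{Y}^d)/\pi_d(\bs{X}^d)$ with $\bs{X}^d \sim \bs{\pi}_d$ and $\bs{Y}^d = \bs{X}^d + \sigma_d \bs{Z}$ for $\bs{Z} \sim N(0, I_d)$, Assumption~\ref{assum:acceptance} gives $\alpha_d(l) = \mathbb{E}\left[g_\alpha(s_d)\right] = \mathbb{E}\left[g_\alpha(e^{B_d})\right]$, where $B_d := \log s_d = \sum_{i=1}^d \left[\log f(X_i^d + \sigma_d Z_i) - \log f(X_i^d)\right]$. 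The first and essential step is therefore to identify the limiting law of $B_d$.

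To do this I would Taylor expand each summand to second order,
\[
    \log f(X_i^d + \sigma_d Z_i) - \log f(X_i^d) = \sigma_d Z_i \frac{f'(X_i^d)}{f(X_i^d)} + \frac{\sigma_d^2 Z_i^2}{2}\left[\frac{f''(X_i^d)}{f(X_i^d)} - \left(\frac{f'(X_i^d)}{f(X_i^d)}\right)^2\right] + R_i,
\]
and sum over $i$. The linear term $\sigma_d \sum_i Z_i\, f'(X_i^d)/f(X_i^d)$ has conditional variance $\sigma_d^2 \sum_i (f'/f)^2(X_i^d)$, which by the law of large numbers and $\sigma_d^2 = l^2/(d-1)$ converges to $l^2 I$; a conditional central limit argument then shows this term is asymptotically $N(0, l^2 I)$. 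The quadratic term $\tfrac{\sigma_d^2}{2}\sum_i Z_i^2 \left[f''/f - (f'/f)^2\right](X_i^d)$ converges in probability to the constant $\tfrac{l^2}{2}\left(\mathbb{E}_f[f''/f] - I\right) = -l^2 I/2$, using the integration-by-parts identity $\mathbb{E}_f[f''/f] = \int f'' = 0$. The remainder $\sum_i R_i$ is controlled by the Lipschitz property of $f'/f$ together with the moment bounds of Assumption~\ref{assum:f} and vanishes, yielding $B_d \Rightarrow N(-l^2 I/2,\, l^2 I)$.

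With this in hand, the map $b \mapsto g_\alpha(e^b)$ is continuous and bounded on $[0,1]$ by \eqref{eq:g_lipschitz}, so the continuous mapping theorem together with bounded convergence gives $\alpha(l) = \lim_{d\to\infty}\mathbb{E}\left[g_\alpha(e^{B_d})\right] = \mathbb{E}\left[g_\alpha(e^{B})\right]$ for $B \sim N(-l^2 I/2, l^2 I)$, and this last expectation is exactly the integral $M_\alpha(l)$ of \eqref{eq:mal}. This proves part (i). Part (ii) is then immediate: since $h_\alpha(l) = l^2 M_\alpha(l)$ is the speed of the limiting diffusion from Theorem~\ref{thm:main}, the optimal scaling is $l^* = \arg\max_l h_\alpha(l)$, and substituting into part (i) identifies the asymptotically optimal acceptance probability as $\alpha(l^*) = M_\alpha(l^*)$.

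The main obstacle is the weak convergence $B_d \Rightarrow N(-l^2 I/2, l^2 I)$, and specifically the uniform control of the Taylor remainder and the combined law-of-large-numbers and central-limit behaviour of the $\bs{X}^d$-dependent linear and quadratic terms. However, this is precisely the acceptance-ratio estimate already required to identify the generator limit in the proof of Theorem~\ref{thm:main}, so I expect to quote the relevant lemmas rather than reprove them; the remaining work, namely the bounded-convergence passage to $M_\alpha(l)$ and the optimisation in part (ii), is routine.
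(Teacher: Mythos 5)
Your proposal is correct, but it takes a genuinely different route from the paper. You prove weak convergence of the full $d$-dimensional log acceptance ratio, $B_d \Rightarrow N(-l^2I/2,\, l^2I)$, and then conclude by integrating the bounded continuous function $b \mapsto g_\alpha(e^b)$ against this limit — the classical argument in the spirit of \cite{roberts1997weak}. The paper instead never states a limit theorem for the log ratio: it recycles the conditional-expectation machinery already built for the generator convergence, writing $\alpha_d(l)$ as an iterated expectation with the first coordinate split off, replacing the inner expectation $E^{d,\alpha}$ by the exactly-Gaussian $E_{lim}^{d,\alpha}$ (Lemma~\ref{lemm:elim2}), Taylor-expanding $N_{d,\alpha}(Y_1^d)$ so that only $M_{d,\alpha}(0)$ survives, and finishing with the law of large numbers applied to $R_d$ plus the continuous mapping theorem. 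What your route buys is conceptual transparency — the Gaussian limit of the log ratio is made explicit and the conclusion is immediate for any bounded continuous balancing function; what the paper's route buys is economy — all the hard estimates (remainder control, the $\varphi(d)$ bound) are quoted verbatim from the proof of Theorem~\ref{thm:main} rather than re-derived.

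Two small caveats on your sketch. First, your "conditional central limit argument" for the linear term is unnecessary: conditionally on $\bs{X}^d$ that term is \emph{exactly} Gaussian, so only a law of large numbers for the conditional variance is needed (this is also what makes the paper's $E_{lim}^{d,\alpha}$ exactly a Gaussian expectation). Second, your plan to "quote the relevant lemmas" needs a touch of care: the paper's estimates (Lemma~\ref{lemm:wd}, the bound \eqref{eq:phi}) are stated for the last $d-1$ coordinates, uniformly over $\bs{x}^d \in F_d$, not for the full unconditional sum; converting them into your unconditional statement requires adding the (negligible) first-coordinate term and combining the uniform-on-$F_d$ control with $\mathbb{P}(F_d) \to 1$ — exactly the manoeuvre performed in Lemma~\ref{lemm:elim2}. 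Neither point is a gap in substance, and part (ii) is immediate in both treatments.
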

    
    Corollary~\ref{corr:1} is of considerable practical relevance since for different acceptance functions it yields the optimal target acceptance probability to tune to.

    \subsection{Optimal results for some acceptance functions} 
    \label{sec:barker}

        In Section~\ref{sec:acc_func}, we discussed some important members of the class $\mathcal{A}$. Corollary~\ref{corr:1} can then be used to obtain the AOAR for them by maximizing the speed measure of the limiting diffusion process. For Barker's algorithm, from Theorem \ref{thm:main} and \eqref{eq:barker}, the speed measure $h_{\text{B}}(l)$ of the corresponding limiting process is $h_{\text{B}}(l) = l^2M_{\text{B}}(l)$ where,
        \[
            M_{\text{B}}(l) = \int_{\mathbb{R}} \frac{{}1}{1 + e^{-b}} \frac{1}{\sqrt{2\pi l^2I}} \exp\left \{ \frac{-(b + l^2I/2)^2}{2l^2I}\right\} db.
        \]
        Maximizing $h_{\text{B}}(l)$, the optimal value, $l^*$, is approximately (see Appendix \ref{sec:speed}),
        \[
            l^* =  \frac{2.46}{\sqrt{I}}\,.
        \] 

        \begin{figure}[htbp]
            \centering
            \includegraphics[height = 1.5in]{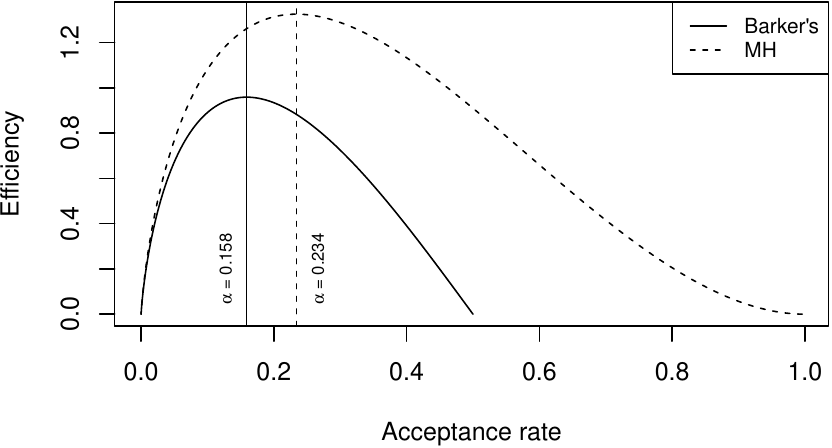}  \hspace{.5cm}
            \includegraphics[height = 1.9in]{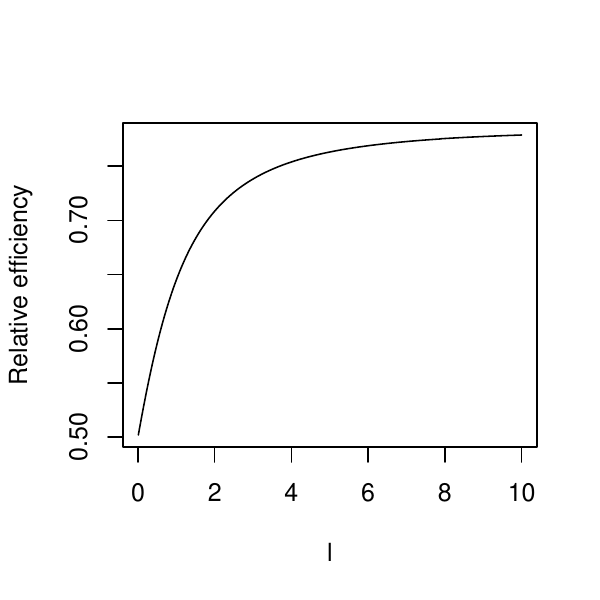} 
            \caption{Efficiency $(h(l))$ versus acceptance rate $(\alpha(l))$ with $I = 1$ (left). Relative efficiency of Barker's vs MH $(h_{\text{B}}(l)/h_{\text{MH}}(l) )$ versus $l$ (right).}
            \label{fig:eff_acc}
        \end{figure}

        By Corollary~\ref{corr:1}, using this $l^*$ yields an asymptotic acceptance rate of approximately $0.158$. 
        Hence, when the optimal variance is not analytically tractable in high dimensions, one may consider tuning their algorithm so as to achieve an acceptance probability of approximately $0.158$. Additionally, the right plot in Figure \ref{fig:eff_acc} verifies that the relative efficiency of Barker's versus MH, as measured by the ratio of their respective speed measures for a fixed $l$, remains above $0.5$ \citep[see Theorem 4 in ][]{latuszynski2013}; this relative efficiency increases as $l$ increases. Additionally, the ratio of the speed measures of Barker's versus MH at their respective optimal scalings is $0.72$. This quantifies the loss in efficiency in running the best version of Barker's compared to the best version of MH algorithm. We can also study the respective speed measures as a function of the acceptance rate; this is given in the left plot in Figure~\ref{fig:eff_acc}. We find that as the asymptotic acceptance rate increases, the speed measure for Barker's decreases more rapidly than MH. This suggests that there is much to gain by appropriately tuning the Barker's algorithm.

        \begin{figure}[htbp]
            \centering
            \includegraphics[width = 2.3in]{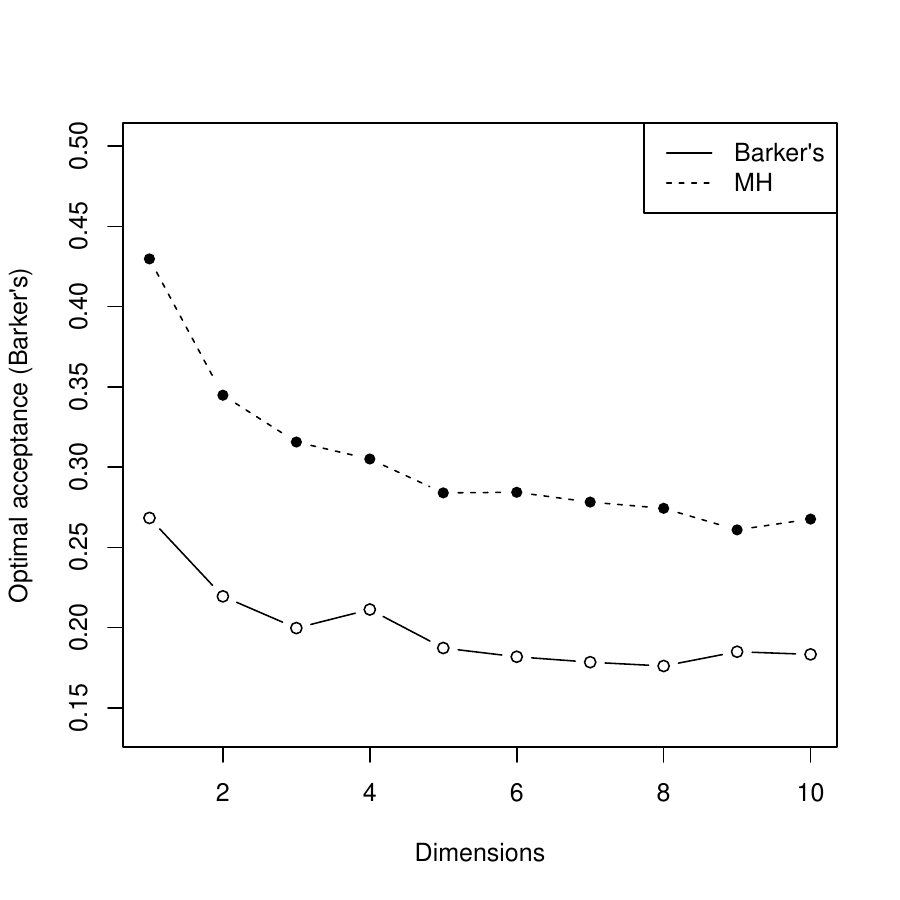}
            \caption{Optimal acceptance rate against number of dimensions.}
            \label{fig:aoar}
        \end{figure}

        For lower dimensions, the optimal acceptance rate is higher than the AOAR. Figure \ref{fig:aoar} shows optimal values for MH and Barker's algorithms on isotropic Gaussian targets in dimensions $1$ to $10$; proposal kernel being the same as in the setting of Theorem \ref{thm:main}. This plot is produced using the criterion of minimizing first order auto-correlations in each component \citep{gelman1996, roberts1998,roberts2001}. For $\alpha_{\text{MH}}$ and $\alpha_{\text{B}}$, the optimal acceptance rate in one dimension is $0.43$ and $0.27$ respectively.


        For Lazy-MH with $\epsilon \in [0, 1]$, Corollary \ref{corr:1}  implies that the AOAR of the algorithm is $(1 - \epsilon)0.234$ with the same optimal $l^*$ as MH.  For the acceptance functions, $\alpha_h^{\text{H}}$ in \eqref{eq:bedard_seq},
        \[
            M_h(l) = 2\bs{\Phi} \left(-\frac{\sqrt{h + l^2I}}{2} \right)\,.
        \]
        With $h = 0$, we obtain the result of \cite{roberts1997weak} for MH. Further, the left plot of Figure \ref{fig:optimal} highlights that as $h \to 0$, the AOAR increases to $0.234$ and the algorithm worsens as $h$ increases. Moreover, for $h \approx 1.913$, the AOAR is roughly $0.158$, i.e. equivalent to the Barker's acceptance function. 

        \begin{figure}[htbp]
            \centering
            \includegraphics[height = 2.2in]{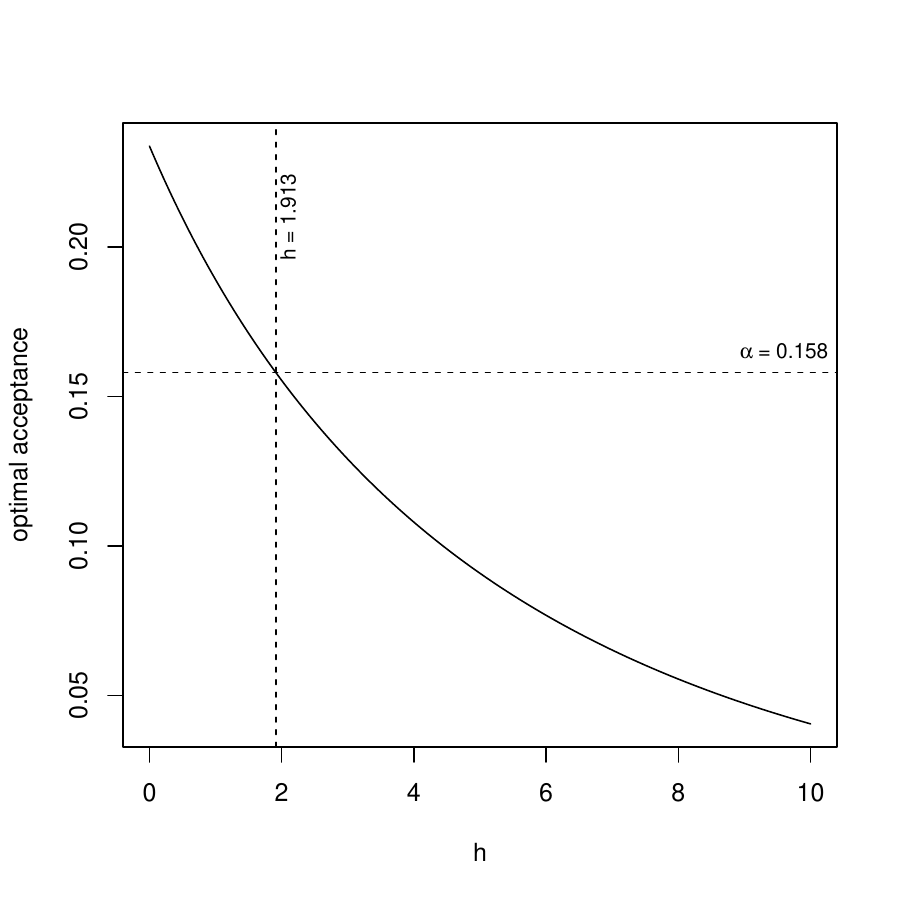}  \hspace{.5cm}
            \includegraphics[height = 2.2in]{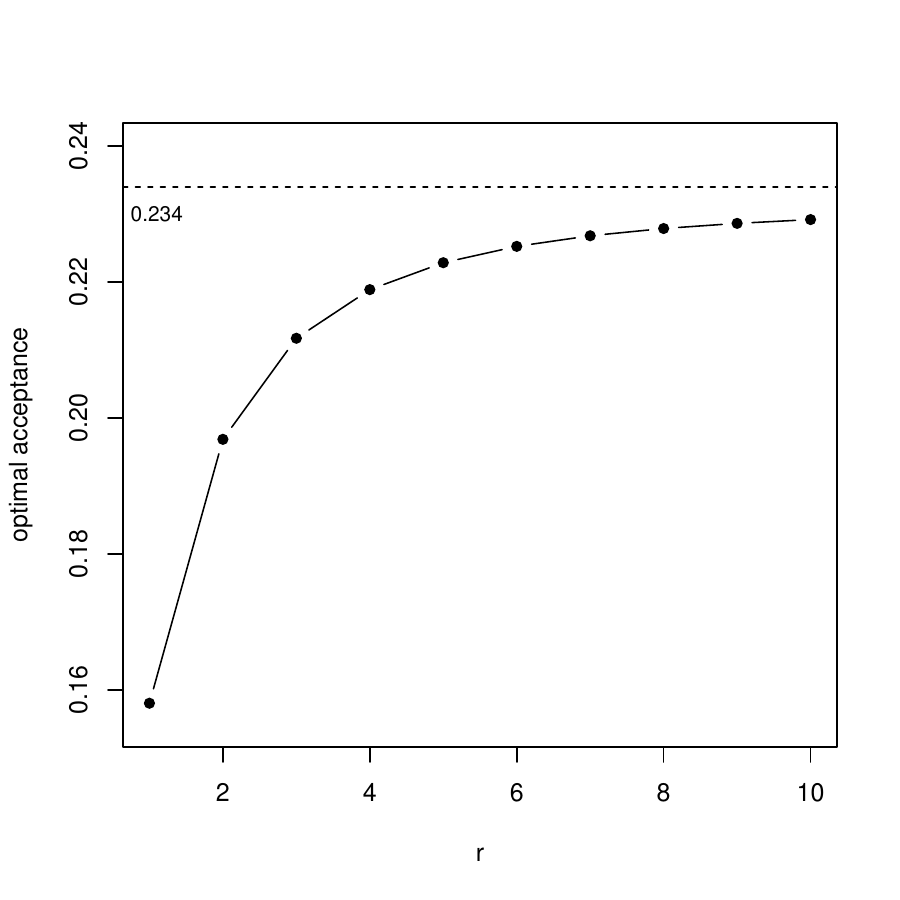} 
            \caption{Optimal acceptance rates for $\alpha^{\text{H}}_h$ against $h$ (left) and $\alpha^{\text{R}}_r$ against $r$ (right). }
            \label{fig:optimal}
        \end{figure}

        Lastly, the AOARs for $\alpha_r^{\text{R}}$ in \eqref{eq:bark_seq} are available. For $r = 1, \dots, 10$, the results have been plotted in the right plot of Figure \ref{fig:optimal}. As anticipated, the AOAR approaches $0.234$ as $r$ increases. Notice that $\alpha_2^{\text{R}}$ yields an AOAR of $0.197$, which is a considerable increase from $\alpha_B = \alpha_1^{\text{R}}$.  
         Table \ref{tab:results} below summarizes the results of this section.$^{1}$\footnote{$^1$ Codes for all  plots and tables are available at \texttt{https://github.com/Sanket-Ag/BarkerScaling}}
        \begin{table}[htbp]
            \centering
            \begin{tabular}{|c|c|ccc|ccc|c|}
                \hline
                 & $\alpha_{\text{MH}}$ & $\alpha^{\text{H}}_1$ & $\alpha^{\text{H}}_{1.913}$ & $\alpha^{\text{H}}_5$ & $\alpha^{\text{R}}_{10}$ & $\alpha^{\text{R}}_5$ & $\alpha^{\text{R}}_2$ & $\alpha_{\text{B}}$ \\
                \hline
                $M_{\alpha}(l^*)$ & 0.234 & 0.189 & 0.158 & 0.129 & 0.229 & 0.223 & 0.197 & 0.158 \\
                \hline
                $|l^*\sqrt{I}|$ & 2.38 & 2.43 & 2.46 & 2.49 & 2.39 & 2.39 & 2.42 & 2.46 \\
                \hline
            \end{tabular}
            \caption{Optimal proposal variance and asymptotic acceptance rates.}
            \label{tab:results}
        \end{table}

\section{Numerical results}
\label{sec:sims}
    We study the estimation quality for different expectations as a function of the proposal variance (acceptance rate) for the generalized Barker's acceptance function, $\alpha_r^{\text{R}}$. We focus on $r = 1$ (Barker's algorithm) and $r = 2$.
    Suppose $f:\mathbb{R}^d \to \mathbb{R}$ is the function whose expectation with respect to $\bs{\pi}_d$ is of interest. Let $\{f(\bs{X}_n)\}$ be the mapped process. Similar to \cite{roberts2001}, we assess choice of proposal variance by the convergence time:
    \begin{equation*}
        \text{convergence time } := \frac{-k}{\log (\rho_k)}\,,
    \end{equation*}
    where $\rho_k$ is the lag-$k$ autocorrelation in $\{f(\bs{X}_n)\}$. In each of the following simulations, convergence time is estimated by averaging over $10^3$ replications of Markov chains, each of length $10^6$ with $k = 1$. We chose a range of values of $l$ where $l$ is such that $\sigma^2_d = l^2/d$ in a Gaussian proposal kernel $Q_d(\bs{x}^d, \cdot) = N(\bs{x}^d, \sigma^2_d \bb{I}_d)$.
    
    Consider first the case of an isotropic target, $\bs{\pi}_{d} = N_{d}(\bs{0}, \bb{I}_{d})$ with isotropic Gaussian proposals; the conditions of Theorem \ref{thm:main} are satisfied. The estimated convergence time for $f(\bs{x}) = x_1$ and $f(\bs{x}) = \bar{\bs{x}}$ where $\bar{\bs{x}}$ is the mean of all components, $x_1, \dots, x_d$, is plotted in Figure \ref{fig:ideal} (top row). Here, $d = 50$. 
    \begin{figure}[htbp]
        \centering
        \begin{subfigure}[b]{0.45\textwidth}
            \centering
            \includegraphics[width=2.3in]{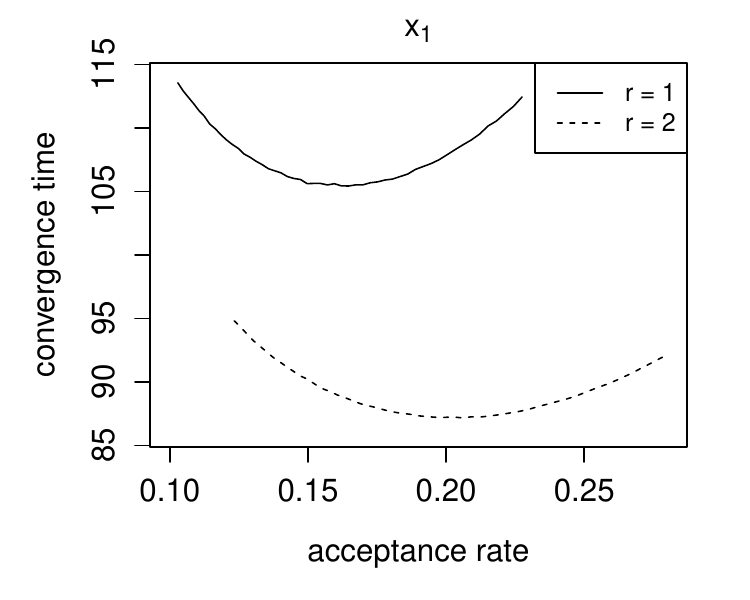}
        \end{subfigure}
        \begin{subfigure}[b]{0.45\textwidth}
            \centering
            \includegraphics[width=2.3in]{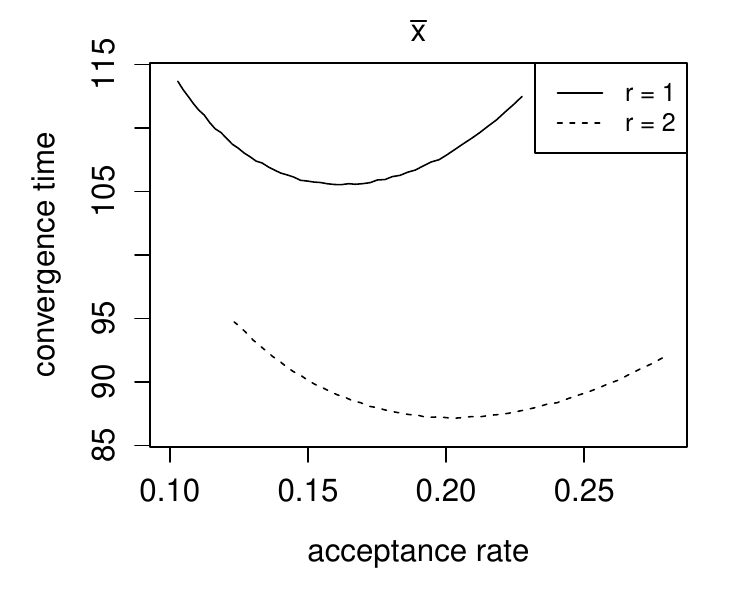}
        \end{subfigure} \\ 
        \begin{subfigure}[b]{0.45\textwidth}
            \centering
            \includegraphics[width=2.3in]{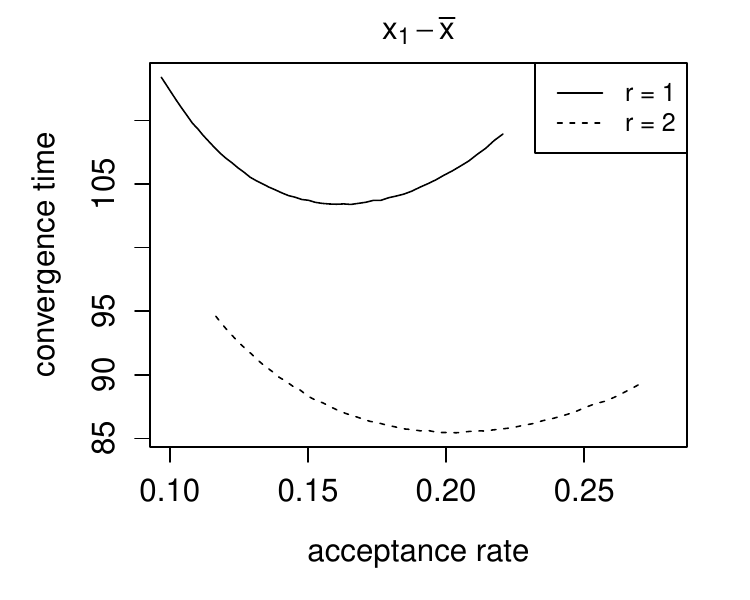}
        \end{subfigure}
        \begin{subfigure}[b]{0.45\textwidth}
            \centering
            \includegraphics[width=2.3in]{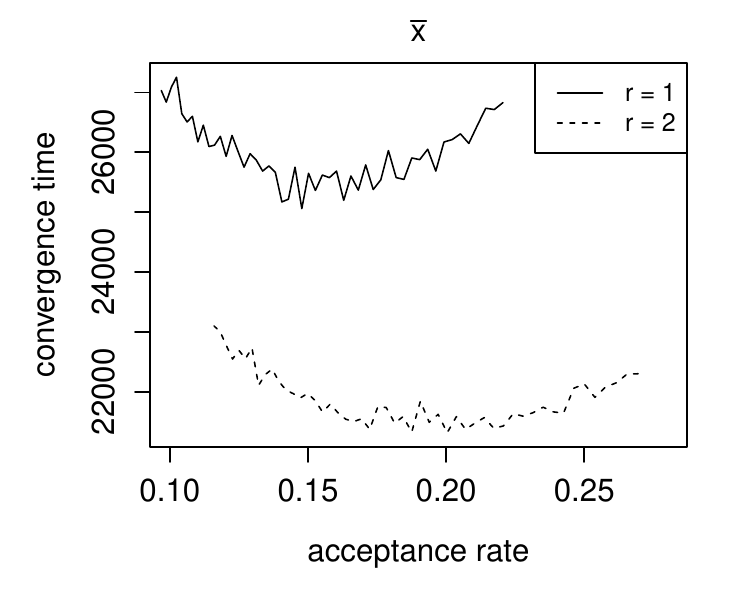}
        \end{subfigure}
        \caption{ Convergence times for $\alpha_{\text{B}}$ against acceptance rate in the isotropic setting (top row) and the correlated target setting (bottom row).}
        \label{fig:ideal}
    \end{figure}
    For both functions of interest, the optimal performance i.e. the minimum convergence time, corresponds to an acceptance rate of approximately $0.158$ for $\alpha_{\text{B}}$ and $0.197$ for $\alpha_2^{\text{R}}$; the slight overestimation is due to the finite dimensional setting.
    
    Next, we consider $\bs{\pi}_d = N_d(\bs{0}, \bs{\Sigma}_d)$ where $\bs{\Sigma}_d$ is a $d \times d$ matrix with $1$ on its diagonal and all other elements are equal to some non-zero $\rho$. Here, the assumptions in Theorem~\ref{thm:main} are not satisfied. For such a target and for $\alpha_{\text{MH}}$, \cite{roberts2001} showed that the rate of convergence of the algorithm is governed by the eigenvalues of $\bs{\Sigma}_d$. In particular, the eigenvalues of $\bs{\Sigma}_d$ are $dp + 1 - \rho$ and $1 - \rho$ with associated eigenvectors $\bar{\bs{x}}$ and $x_i - \bar{\bs{x}} \, (i = 1,\dots,d)$, respectively. Then, it was shown that the algorithm converges quickly for functions orthogonal to $\bar{\bs{x}}$, but much more slowly for $\bar{\bs{x}}$. Despite the differing rates of convergence, the optimal acceptance rate, corresponding to the minimum convergence time, remains the same. We find this to be also true for $\alpha_{\text{B}}$ and $\alpha^{\text{R}}_2$ as illustrated in Figure \ref{fig:ideal} (bottom row) where we present convergence times for $x_1 - \bar{\bs{x}}$ and $\bar{\bs{x}}$. 
    Once again, $d = 50$. The large difference between convergence times for both is quite evident from the $y-$axis of the two plots. The minimum again lies in a region around the asymptotic optimal. We note that due to the slow convergence rate of $\bar{\bs{x}}$, the process demonstrates slow mixing, yielding more variable estimates of the convergence time. For both simulation settings, we see the expected improvement in the convergence time for $\alpha_2^{\text{R}}$ compared to $\alpha_{\text{B}}$.

    \subsection{A Bayesian logistic regression example}
        We consider fitting a Bayesian logistic regression model to the famous Titanic dataset which contains information on crew and passengers aboard the 1912 RMS Titanic ship. Let $\bs{y}$ denote the response vector (whether they survived or not) and $\bs{X}$ denote the $n \times d$ model matrix; here $d = 10$.  We assume a multivariate zero-mean Gaussian prior on $\bs{\beta}$ with covariance $100\bb{I}_{10}$. The resulting target density is
        \begin{equation*}
            \pi(\bs{\beta} \mid \bs{y}) \propto \exp \left\{ - \frac{\bs{\beta}^T\bs{\beta}}{2} \prod_{i = 1}^n \frac{\exp(-\bs{x}_i^T \bs{\beta})^{1 - y_i}}{1 + \exp(-\bs{x}_i^T \bs{\beta})}\right\}\,.
        \end{equation*}
        For the Titanic dataset, the resulting posterior has a complicated covariance structure with many components exhibiting an absolute mutual correlation of beyond .50. The posterior is also ill-conditioned with the condition number of the estimated target covariance matrix being $\approx 10^5$.        As seen in the bottom row of Figure \ref{fig:ideal}, in such situations an isotropic proposal kernel might perform poorly for most functions. We instead  consider a Gaussian proposal scheme where the proposal covariance matrix is taken to be proportional to the target covariance matrix. This is a common strategy for dealing with targets with correlated components and forms the basis for many adaptive MCMC kernels \citep{roberts2009examples}.         We implement the Barker's algorithm to sample from the posterior. Let $\bs{\Sigma}_d$ denote the covariance matrix associated with the posterior distribution of $\bs{\beta}$, then the proposal kernel $Q_d(\bs{x}^d, \cdot) = N(\bs{x}^d, \sigma^2_d \bs{\Sigma}_d)$. Since $\bs{\Sigma}_d$ is unavailable, we  estimate it from a pilot MCMC run of size $10^7$. We then consider various values of $\sigma^2_d = l^2/d$.

        The performance of the algorithm for different functions of interest is plotted in Figure~\ref{fig:bayes}. Since this is a 10-dimensional problem, the optimal acceptance rate from Figure~\ref{fig:aoar} is approximately $0.18$. The convergence times for both, $\beta_1 - \bar{\beta}$ and $\bar{\bs{\beta}}$, are similar. Further, both are minimized at approximately the same acceptance rate of $0.18$. It is natural here to be interested in estimating the posterior mean vector. Thus, we also study the properties of vector $\bs{\beta}$ with efficiency measured via the multivariate effective sample size (ESS) \citep{vats2019}. The ESS returns the  equivalent number of iid samples from $\bs{\pi}$ that would yield the same variability in estimating the posterior mean as the given set of MCMC samples. In Figure~\ref{fig:bayes}, we see that the optimal acceptance rate corresponding to the highest ESS values is achieved around $0.18$.
        \begin{figure}[htbp]
            \centering
 
            \begin{subfigure}[b]{0.32\textwidth}
                \centering
                \includegraphics[width=1.7in]{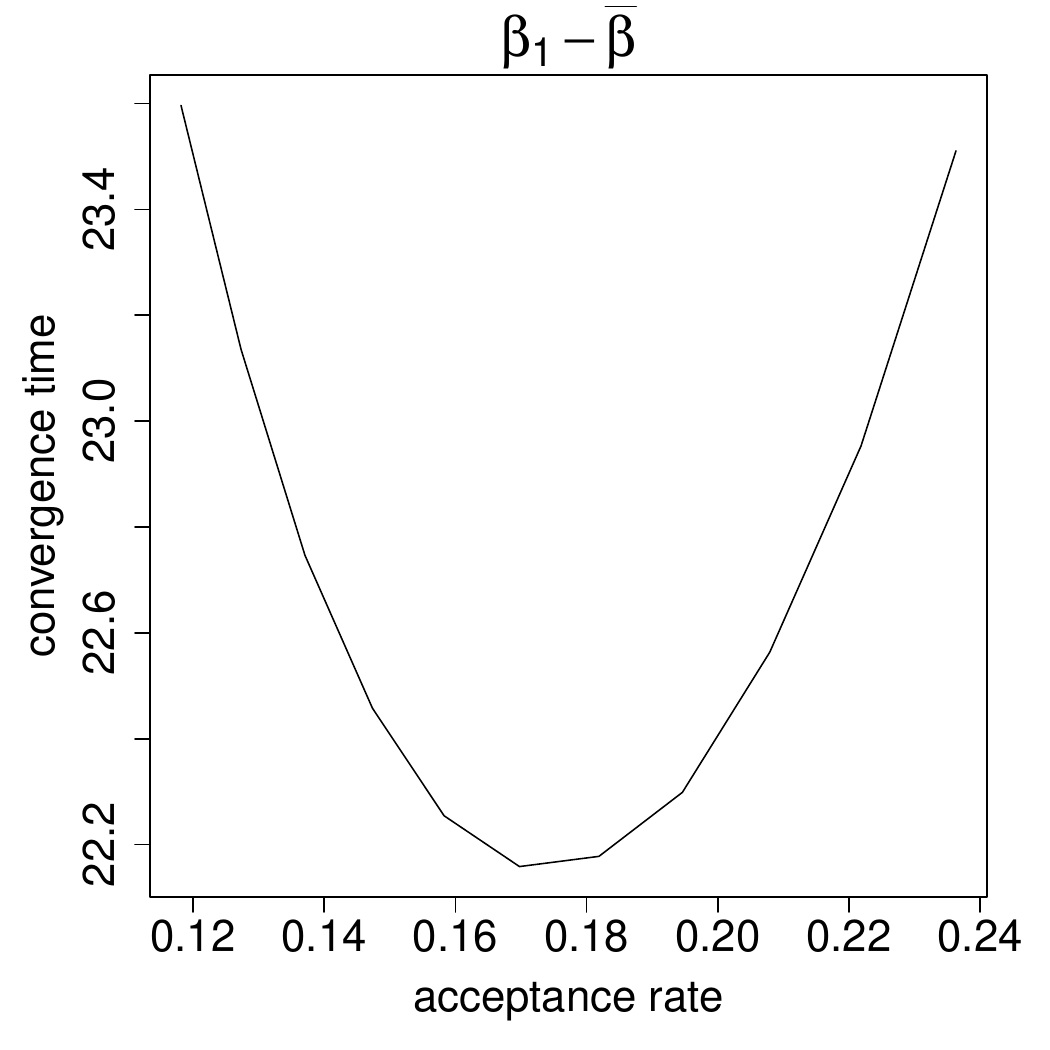}
            \end{subfigure}
            \begin{subfigure}[b]{0.32\textwidth}
                \centering
                \includegraphics[width=1.7in]{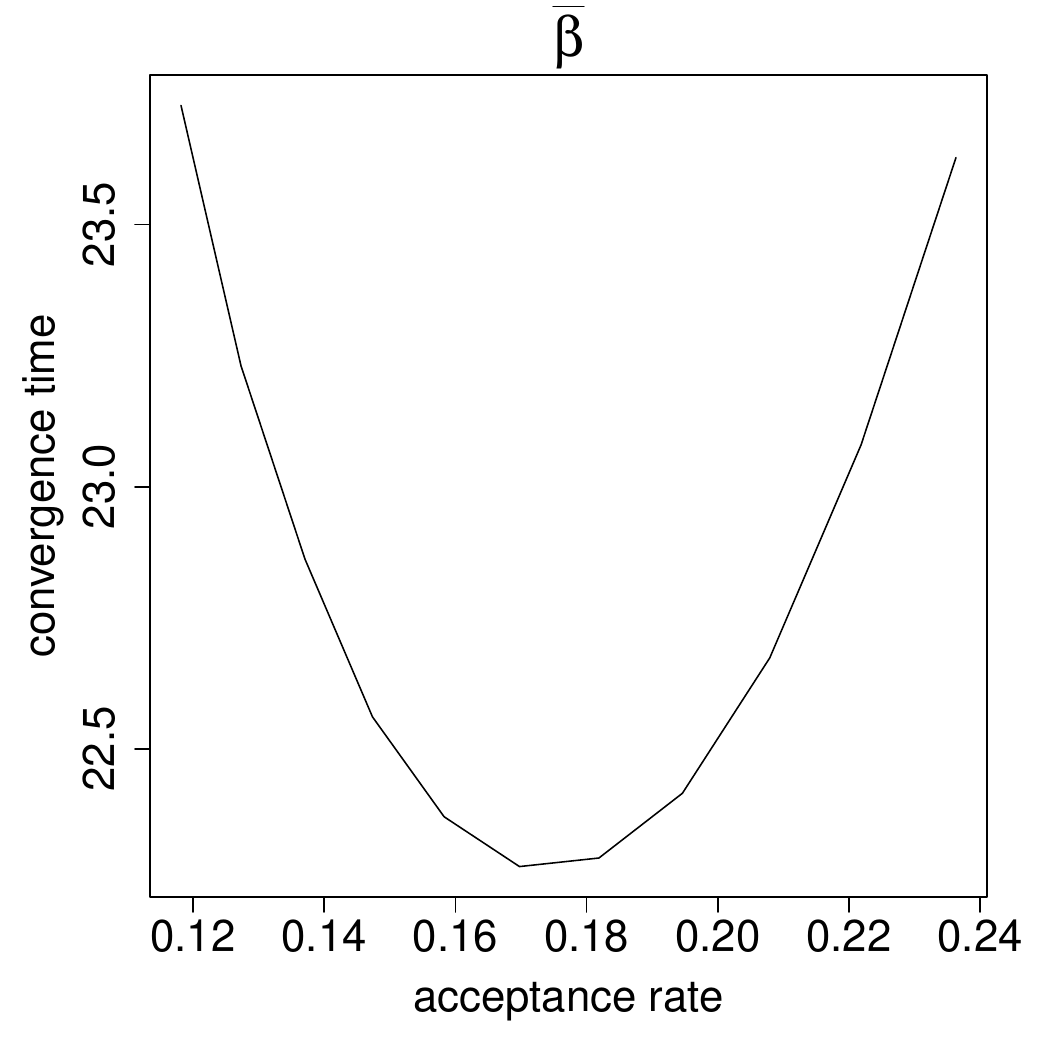}
            \end{subfigure}
            \begin{subfigure}[b]{0.32\textwidth}
                \centering
                \includegraphics[width=1.7in]{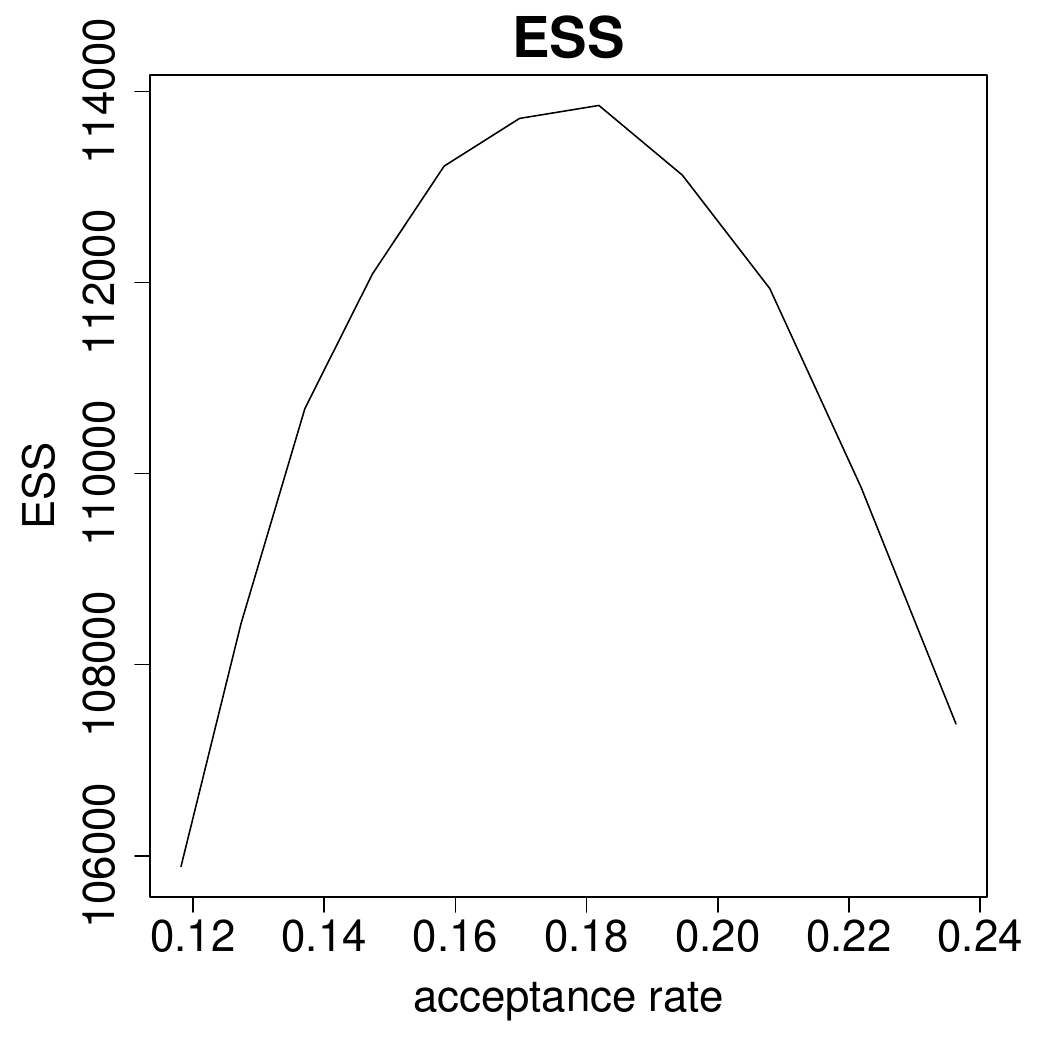}
            \end{subfigure}
            \caption{Convergence times for $\alpha_{\text{B}}$ (left and middle) and multivariate ESS for the posterior mean vector (right) against acceptance rate.}
            \label{fig:bayes}
        \end{figure}

\section{Conclusions}  
\label{sec:conc}
    We obtain optimal scaling and acceptance rates for a large class of acceptance functions. In doing so, we found that the scaling factor of $1/d$ for the proposal variance holds for all acceptance functions, indicating that the acceptance functions are not likely to affect the rate of convergence, just the constants associated with that rate. Thus, practitioners need not hesitate in switching to other acceptance functions when the MH acceptance probability is not tractable, as long as Corollary~\ref{corr:1} is used to tune their algorithm accordingly.  There is also an inverse relationship between optimal variance and AOAR (see Table \ref{tab:results}) implying that when dealing with sub-optimal acceptance functions, the algorithm seeks larger jumps. The computational cost of the Bernoulli factory we present for $\alpha_r^{\text{R}}$  in Appendix~\ref{sec:BF} increases with $r$. Given the large jump in the optimal acceptance probability from $r = 1$ to $r = 2$, the development of more efficient Bernoulli factories is an important problem for future work. 
   
      
   The assumption of starting from stationarity is a restrictive one. For MH with Gaussian proposals, the scaling factor of $1/d$ is still optimal when the algorithm is in the transient phase \citep{christensen2005,jourdain2014,kuntz2019}. The optimal acceptance probability may vary depending on the starting distribution. We envision similar results are viable for the general class of acceptance functions, and this is important future work.   Our results are limited to only Gaussian proposals and trivially decomposable target densities. Other proposal distributions may make use of the gradient of the target e.g. Metropolis-adjusted Langevin algorithm \citep{roberts1996} and Hamiltonian Monte Carlo \citep{duaneHMC1987}. In problems where $\alpha_{\text{MH}}$ cannot be used, the gradient of the target density is likely unavailable, thus limiting our attention to a Gaussian proposal is reasonable.
    On the other hand, generalizations to other target distributions is important. 
    For MH algorithms, \cite{bedard2008,sher:rob:2009} relax the independence assumption, while \cite{roberts2001} relax the identically distributed assumption. Additionally, \cite{yang2020} present a proof of weak convergence for MH for more general targets, and \cite{schmon2021optimal} provide optimal scaling results for general Bayesian targets using large-sample asymptotics. In these situations, extensions to other acceptance probabilities are similarly possible. Additionally, we encourage future work  in optimal scaling to leverage our proof technique to demonstrate results for the wider class of acceptance probabilities.

\section{Acknowledgements}
The authors thank the referees and the editor for their comments that helped improve the presentation of the paper. Dootika Vats is supported by SERB grant: SPG/2021/001322. Krzysztof {\L}atuszy\'nski is supported by the Royal Society through the  Royal Society University Research Fellowship.  Gareth Roberts is supported by the EPSRC grants:  CoSInES (EP/R034710/1) and Bayes for Health (EP/R018561/1).

\appendix

\section{Proof of Theorem \ref{thm:main}}  
\label{sec:proof}
    The proof is structurally similar to the seminal work of \cite{roberts1997weak}, in that we will show that the generator of the sped-up process, $\mathbf{Z}^d$, converges to the generator of an appropriate Langevin diffusion. Define the discrete-time generator of $\bs{Z}^d$ as,
    \begin{equation}
    \label{eq:gen}
        G_d V(\bs{x}^d) = d \cdot \mathbb{E}_{\bs{Y}^d}\left[(V(\bs{Y}^d) - V(\bs{x}^d))\alpha(\bs{x}^d, \bs{Y}^d)\right],
    \end{equation}
    for all those $V$ for which the limit exists.  
    Since, interest is in the first component of $\bs{Z}^d$, we consider only those $V$ which are functions of the first component only. Now, define the generator of the limiting Langevin diffusion process with speed measure $h_{\alpha}(l)$ as,
    \begin{equation}
    \label{eq:lang_gen}
        GV(x) = h_{\alpha}(l)\left[ \frac{1}{2}V''(x) + \frac{1}{2}\frac{d}{dx}(\log f)(x)V'(x) \right].
    \end{equation}
    The unique challenge in our result is identifying the speed measure $h_{\alpha}(l)$ for a general acceptance function $\alpha \in \mathcal{A}$. Proposition~\ref{prop:odd} is a key result that helps us obtain a form of $h_{\alpha}(l)$ without resorting to approximations.

    To prove Theorem \ref{thm:main}, we will show that there are events $F_d \subseteq \mathbb{R}^d$ such that for all $t$,
    \[
        \mathbb{P}[\bs{Z}_s^d \in F_d,\  0 \le s \le t] \to 1 \text{ as } d \to \infty \quad \text{    and}
    \]
    \[
        \lim_{d \to \infty}\sup_{\bs{x}^d \in F_d}|G_dV(\bs{x}^d) - GV(x_1^d)| = 0\,,
    \]
    for a suitably large class of real-valued functions $V$. Moreover, due to conditions of Lipschitz continuity on $f'/f$, a core for the generator $G$ has domain $C_c^{\infty}$, the class of infinitely differentiable functions with compact support \citep[Theorem~2.1, Chapter 8]{ethier2009}. Thus, we can limit our attention to only those $V \in C_c^{\infty}$ that are a function of the first component.

    Consider now the setup of Theorem~\ref{thm:main}. Let $w = \log f$ and $\alpha \in \mathcal{A}$ with the balancing function $g_{\alpha}$. Let $w'$ and $w''$ be the first and second derivatives of $w$ respectively. Define the sequence of sets $\{F_d \subseteq \mathbb{R}^d, d > 1 \}$ by,
       %
    \begin{align*}
     F_d = \big\{|R_d(x_2, \dots, x_d) - I| &< d^{-1/8}  \big\} \cap \big\{|S_d(x_2, \dots, x_d) - I| < d^{-1/8} \big\} \quad \text{where}, \\
     R_d(x_2, \dots, x_d) &= \frac{1}{d-1}\sum_{i=2}^d \ [\log(f(x_i))']^2 = \frac{1}{d-1}\sum_{i=2}^d \ [w'(x_i)]^2 \quad \text{     and} \\
      S_d(x_2, \dots, x_d) &= \frac{-1}{d-1} \sum_{i=2}^d \ [\log(f(x_i))''] = \frac{-1}{d-1}\sum_{i=2}^d \ [w''(x_i)]  \,.
    \end{align*}
    The following results from \cite{roberts1997weak} will be needed.
    \begin{lemma}[\cite{roberts1997weak}]
    \label{lemm:fd}
       Let Assumption~\ref{assum:f} hold. If $\bs{X}^d_0 \sim \bs{\pi}_d$ for all $d$, then, for a fixed $t$, 
        $
            \mathbb{P}[\bs{Z}_s^d \in F_d,\  0 \le s \le t] \to 1 \text{ as } d \to \infty\,.
        $
    \end{lemma}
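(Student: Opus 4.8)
The plan is to reduce the uniform-in-time statement to a union bound over a single stationary snapshot. Because the sped-up process is piecewise constant with $\bs{Z}^d_s = \bs{X}^d_{[ds]}$, as $s$ ranges over $[0,t]$ the index $[ds]$ takes only the integer values $0, 1, \dots, [dt]$, so that
\[
\{\bs{Z}^d_s \in F_d,\ 0 \le s \le t\} = \bigcap_{n=0}^{[dt]} \{\bs{X}^d_n \in F_d\}.
\]
It thus suffices to bound $\sum_{n=0}^{[dt]} \mathbb{P}[\bs{X}^d_n \notin F_d]$. Since $\bs{X}^d_0 \sim \bs{\pi}_d$ and each chain is $\bs{\pi}_d$-invariant, every marginal satisfies $\bs{X}^d_n \sim \bs{\pi}_d$, so $\mathbb{P}[\bs{X}^d_n \notin F_d]$ does not depend on $n$; I would denote this common value $p_d$ and aim to show $([dt]+1)\,p_d \to 0$.

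For the single-snapshot bound, note that $F_d$ involves only the coordinates $x_2, \dots, x_d$, which under $\bs{\pi}_d$ are i.i.d. with density $f$. Thus $R_d$ and $S_d$ are sample means of the i.i.d. summands $[w'(X_i)]^2$ and $-w''(X_i)$, where $w = \log f$. Using $w' = f'/f$ and $w'' = f''/f - (f'/f)^2$ together with $\int f'' = 0$ (a consequence of $f \in C^2$ with the stated regularity), both summands have common mean $I = \mathbb{E}_f[(f'/f)^2]$. The moment conditions in Assumption~\ref{assum:f} are exactly what is needed here: $\mathbb{E}_f[(f'/f)^8] < \infty$ gives a finite fourth moment for $[w'(X)]^2$, and combined with $\mathbb{E}_f[(f''/f)^4] < \infty$ it gives a finite fourth moment for $w''(X)$ as well.

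Given finite fourth moments of the summands, the standard bound for centered i.i.d. averages yields $\mathbb{E}[(R_d - I)^4] = O(d^{-2})$, and likewise for $S_d$. A fourth-moment Markov inequality then gives, for each defining event of $F_d$,
\[
\mathbb{P}\!\left[|R_d - I| \ge d^{-1/8}\right] \le \frac{\mathbb{E}[(R_d - I)^4]}{d^{-1/2}} = O(d^{-3/2}),
\]
and the same for $S_d$, so $p_d = O(d^{-3/2})$ after a further union bound over the two events. Summing over the $[dt]+1 = O(d)$ time points gives $([dt]+1)\,p_d = O(d^{-1/2}) \to 0$, which is the probability of the complementary event, completing the argument.

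I expect the main obstacle to be the uniformity over $0 \le s \le t$ rather than the concentration at a fixed time. A second-moment (Chebyshev) bound gives a per-snapshot failure probability of only $O(d^{-3/4})$, and the union bound over $O(d)$ times would then blow up like $O(d^{1/4})$. The resolution---and the reason Assumption~\ref{assum:f} demands eighth and fourth moments rather than merely second moments---is to pass to fourth-moment control, whose $O(d^{-3/2})$ per-snapshot decay is fast enough to survive multiplication by the $O(d)$ distinct time points. Verifying the $O(d^{-2})$ fourth-central-moment bound is routine once finiteness of the summands' fourth moments is established, and the identity $\int f'' = 0$ needed to identify the common mean $I$ should be confirmed under the regularity in Assumption~\ref{assum:f}.
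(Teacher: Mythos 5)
Your proof is correct and follows essentially the same route as the source: the paper itself does not reprove this lemma but imports it from \cite{roberts1997weak}, whose argument is exactly your combination of stationarity, a union bound over the $[dt]+1$ discrete time points, and a fourth-moment Markov inequality giving $O(d^{-3/2})$ per-snapshot failure probability (this is precisely why the eighth-moment condition on $f'/f$ appears in Assumption~\ref{assum:f}). Your identification of the common mean $I$ via $\int f'' = 0$, flagged as needing the stated regularity, is also how the original handles it.
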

    \begin{lemma}[\cite{roberts1997weak}]
        \label{lemm:wd}
        Let Assumption~\ref{assum:f} hold. Also, let
        \[
            W_d(x_1, \dots, x_d) = \sum_{i=2}^d\left( \frac{1}{2}w''(x_i)(Y_i - x_i)^2 +\frac{l^2}{2(d-1)}w'(x_i)^2 \right),
        \]
        where $Y_i \overset{\text{ind}}{\sim} N(x_i, \sigma^2_d)$, $i = 2, \dots, d$. Then, $ \sup_{\bs{x}^d \in F_d}\mathbb{E}\left[\left|W_d(\bs{x}^d)\right|\right] \to 0\,$.
    \end{lemma}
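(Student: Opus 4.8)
The plan is to split $W_d$ into its conditional mean and a mean-zero fluctuation and to control each uniformly over $F_d$. Since the $Y_i$ are independent with $Y_i - x_i \sim N(0,\sigma_d^2)$ and $\sigma_d^2 = l^2/(d-1)$, I would first record the elementary Gaussian moments $\mathbb{E}_{\bs{Y}^{d-}}[(Y_i-x_i)^2] = \sigma_d^2$ and $\mathrm{Var}\big((Y_i-x_i)^2\big) = 2\sigma_d^4$, and then apply the triangle and Cauchy--Schwarz inequalities,
\[
\mathbb{E}_{\bs{Y}^{d-}}\big|W_d\big| \;\le\; \big|\mathbb{E}_{\bs{Y}^{d-}}W_d\big| + \big(\mathrm{Var}_{\bs{Y}^{d-}}W_d\big)^{1/2},
\]
reducing the lemma to showing that the conditional mean and the conditional variance each vanish uniformly for $\bs{x}^d \in F_d$.

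For the mean, only the term $\tfrac12 w''(x_i)(Y_i-x_i)^2$ is random, so
\[
\mathbb{E}_{\bs{Y}^{d-}}W_d = \frac{l^2}{2(d-1)}\sum_{i=2}^d \big(w''(x_i) + w'(x_i)^2\big) = \frac{l^2}{2}\big(R_d - S_d\big).
\]
Writing $R_d - S_d = (R_d - I) - (S_d - I)$ and using the two defining inequalities of $F_d$ gives $\big|\mathbb{E}_{\bs{Y}^{d-}}W_d\big| \le \tfrac{l^2}{2}\big(|R_d - I| + |S_d - I|\big) < l^2 d^{-1/8}$, which tends to $0$ uniformly over $F_d$. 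This step is routine and is precisely where the two-sided control encoded in $F_d$ is exploited.

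The variance term is the crux. By independence of the $Y_i$ and because only the Gaussian summands fluctuate,
\[
\mathrm{Var}_{\bs{Y}^{d-}}W_d = \frac14\sum_{i=2}^d w''(x_i)^2\,\mathrm{Var}\big((Y_i-x_i)^2\big) = \frac{l^4}{2(d-1)^2}\sum_{i=2}^d w''(x_i)^2,
\]
so the task becomes proving $\sup_{\bs{x}^d\in F_d}(d-1)^{-2}\sum_{i=2}^d w''(x_i)^2 \to 0$, i.e. a uniform $O(1)$ bound on the empirical second moment $(d-1)^{-1}\sum_{i=2}^d w''(x_i)^2$. I expect this to be the main obstacle, since the two displayed conditions defining $F_d$ constrain only the averages of $w'^2$ and of $w''$, not of $w''^2$, and $w''$ may be unbounded. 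I would resolve it by augmenting $F_d$ with the event $\{(d-1)^{-1}\sum_{i=2}^d w''(x_i)^2 \le \mathbb{E}_f[w''(X)^2] + 1\}$. Because $w'' = f''/f - (f'/f)^2$, Assumption~\ref{assum:f} (through the moment bounds \eqref{eq:A1A2}) gives $\mathbb{E}_f[w''(X)^2] < \infty$ and indeed $\mathbb{E}_f[w''(X)^4] < \infty$; hence by Chebyshev this event has $\bs{\pi}_d$-probability tending to $1$, so the enlarged $F_d$ still satisfies Lemma~\ref{lemm:fd}. On this set $\mathrm{Var}_{\bs{Y}^{d-}}W_d \le \tfrac{l^4}{2}\big(\mathbb{E}_f[w''(X)^2]+1\big)/(d-1) \to 0$ uniformly, and combining this with the mean estimate in the displayed bound completes the argument.
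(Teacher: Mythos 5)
Your overall decomposition---bounding $\mathbb{E}_{\bs{Y}^{d-}}|W_d|$ by $|\mathbb{E}_{\bs{Y}^{d-}} W_d| + (\mathrm{Var}_{\bs{Y}^{d-}} W_d)^{1/2}$, computing the mean exactly as $\tfrac{l^2}{2}(R_d - S_d)$, and controlling it by the two defining inequalities of $F_d$---is precisely the standard argument behind this lemma in \cite{roberts1997weak}, and that part is correct. However, the step you call the crux rests on a misreading of the hypotheses. Assumption~\ref{assum:f} states that $f'/f = w'$ is Lipschitz; since $f \in C^2$, $w''$ is continuous and hence bounded uniformly by the Lipschitz constant, say $K$. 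Consequently
\[
\mathrm{Var}_{\bs{Y}^{d-}} W_d \;=\; \frac{l^4}{2(d-1)^2}\sum_{i=2}^d w''(x_i)^2 \;\le\; \frac{l^4 K^2}{2(d-1)} \;\longrightarrow\; 0
\]
uniformly over all of $\mathbb{R}^d$, not merely over $F_d$. No augmentation of $F_d$ is needed, and the lemma follows directly from your mean and variance displays.

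The augmentation you propose is, moreover, not innocuous. Lemma~\ref{lemm:fd} is not a statement about a single time point: it asserts that the entire sped-up trajectory $\{\bs{Z}^d_s,\ 0 \le s \le t\}$---on the order of $dt$ steps of the chain---remains in $F_d$ with probability tending to one. Its proof in \cite{roberts1997weak} uses stationarity and a union bound over these $O(d)$ steps, so the single-time complement probability must be $o(d^{-1})$; this is exactly why the defining events of $F_d$ combine the width $d^{-1/8}$ with the eighth/fourth moment conditions in \eqref{eq:A1A2}. For your added event, Chebyshev with $\mathbb{E}_f[w''(X)^4]<\infty$ yields only a complement probability of order $O(d^{-1})$, which after the union bound does not vanish; upgrading to a fourth-moment bound on the empirical mean of $w''(X)^2$ would require $\mathbb{E}_f[w''(X)^8]<\infty$, which the stated assumptions do not supply. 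So, as written, the enlarged $F_d$ cannot be shown to satisfy Lemma~\ref{lemm:fd}, and this detour would leave a genuine gap---fortunately one that the Lipschitz condition renders unnecessary.
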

    \begin{lemma}[\cite{roberts1997weak}]
        \label{lemm:uni1}
    For $Y \sim N(x, \sigma^2_d)$ and $V \in C_c^{\infty}$,
    \begin{equation*}
        \limsup_{d \to \infty}\, \sup_{x \in \mathbb{}R} d|\mathbb{E}[V(Y) - V(x)]| < \infty\,.
    \end{equation*}
    \end{lemma}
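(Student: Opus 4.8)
The plan is to exploit the exact cancellation between the factor $d$ and the $O(1/d)$ proposal variance: since $Y - x \sim N(0, \sigma_d^2)$ with $\sigma_d^2 = l^2/(d-1)$, the key is that $d\,\sigma_d^2 \to l^2$. First I would Taylor-expand $V$ to second order about $x$ with a Lagrange remainder,
\[
V(Y) - V(x) = V'(x)(Y-x) + \tfrac{1}{2}V''(\xi)(Y-x)^2,
\]
where $\xi$ is a (random) point between $x$ and $Y$. The reason to stop at second order is that the first-moment term is what gets eliminated by symmetry, leaving a quadratic term whose expectation is precisely of order $\sigma_d^2$.

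Next I would take expectations with respect to $Y$. Because the Gaussian is centred at $x$, $\mathbb{E}[Y - x] = 0$, so the linear term vanishes and
\[
\mathbb{E}[V(Y) - V(x)] = \tfrac{1}{2}\,\mathbb{E}\!\left[V''(\xi)(Y-x)^2\right].
\]
Since $V \in C_c^{\infty}$, its second derivative $V''$ is continuous with compact support and hence globally bounded; write $K := \sup_{u}|V''(u)| < \infty$. Bounding $|V''(\xi)| \le K$ pointwise inside the expectation yields
\[
\bigl|\mathbb{E}[V(Y) - V(x)]\bigr| \le \tfrac{K}{2}\,\mathbb{E}[(Y-x)^2] = \tfrac{K}{2}\,\sigma_d^2 = \frac{K\,l^2}{2(d-1)}.
\]

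Crucially, this bound is independent of $x$, so the supremum over $x \in \mathbb{R}$ is immediate and I can conclude
\[
\limsup_{d \to \infty}\,\sup_{x \in \mathbb{R}} d\,\bigl|\mathbb{E}[V(Y) - V(x)]\bigr| \le \limsup_{d \to \infty}\frac{K\,l^2\,d}{2(d-1)} = \frac{K\,l^2}{2} < \infty.
\]
There is no serious obstacle here; the only point needing care is that the remainder is controlled \emph{uniformly} in $x$, which is exactly what the compact support of $V$ (and hence the global bound on $V''$) delivers, together with the fact that $\mathbb{E}[(Y-x)^2] = \sigma_d^2$ does not depend on $x$ for a location-family Gaussian proposal.
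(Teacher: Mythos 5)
Your proof is correct, and since the paper does not prove this lemma itself (it simply imports it from \cite{roberts1997weak}), your argument supplies exactly the standard justification: a second-order Taylor bound, the vanishing of the linear term by the symmetry of the Gaussian centred at $x$, and the uniform bound $\sup_u |V''(u)| < \infty$ from compact support, combined with $d\,\sigma_d^2 = d\,l^2/(d-1) \to l^2$. The only minor point worth noting is that bounding $|V''(\xi)| \le K$ pointwise before taking expectations (as you do) is precisely what sidesteps any measurability fuss about the random intermediate point $\xi$, so the argument is complete as written.
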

    For the following proposition, we will utilize the property \eqref{eq:detailed_bal} imposed on $\mathcal{A}$. This proposition is the key to obtaining our main result in such generality.
    \begin{propo}
        \label{prop:odd}
        Let $X \sim N(-\theta/2, \theta)$ for some $\theta > 0$. Let $\alpha \in \mathcal{A}$ with the corresponding balancing function $g_\alpha$. Then $\mathbb{E}\left[ Xg_{\alpha}(e^X)\right] = 0$.
    \end{propo}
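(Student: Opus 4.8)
The plan is to combine the balancing identity~\eqref{eq:detailed_bal} with an exponential-tilting symmetry of the $N(-\theta/2,\theta)$ density, and then show via the substitution $x\mapsto-x$ that the desired expectation equals its own negative, forcing it to vanish.

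First I would check integrability, so that the expectation is well defined and the change of variables below is legitimate. Since $g_\alpha$ takes values in $[0,1]$ by Assumption~\ref{assum:acceptance}, we have $|x\,g_\alpha(e^x)|\le|x|$, and $X$ being Gaussian has a finite first absolute moment; hence $\mathbb{E}[Xg_\alpha(e^X)]$ is finite and there is no $\infty-\infty$ ambiguity when the integral is split or reflected.

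Writing $\phi_\theta$ for the density of $N(-\theta/2,\theta)$, the crux is the identity $\phi_\theta(-x)=e^{x}\phi_\theta(x)$. This follows from a one-line computation: the difference of the exponents is $[(x-\theta/2)^2-(x+\theta/2)^2]/(2\theta)=-x$, so the reflected density picks up exactly the factor $e^{x}$, which is the tilting built into the choice of mean $-\theta/2$. At the same time, rewriting \eqref{eq:detailed_bal} with $z=e^{x}$ gives $g_\alpha(e^{x})=e^{x}g_\alpha(e^{-x})$, equivalently $e^{x}g_\alpha(e^{-x})=g_\alpha(e^{x})$.

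Finally I set $E:=\mathbb{E}[Xg_\alpha(e^X)]=\int_{\mathbb{R}} x\,g_\alpha(e^{x})\,\phi_\theta(x)\,dx$ and substitute $x\mapsto-x$, obtaining $E=\int_{\mathbb{R}}(-x)\,g_\alpha(e^{-x})\,\phi_\theta(-x)\,dx=\int_{\mathbb{R}}(-x)\,e^{x}g_\alpha(e^{-x})\,\phi_\theta(x)\,dx$. The point is that the tilting factor $e^{x}$ coming from $\phi_\theta(-x)$ is precisely what the balancing relation needs to convert $g_\alpha(e^{-x})$ back into $g_\alpha(e^{x})$; the integrand therefore becomes $-x\,g_\alpha(e^{x})\,\phi_\theta(x)$ and the right-hand side equals $-E$. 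Hence $2E=0$, that is, $E=0$. There is no genuine obstacle in this argument; the only care needed is to confirm integrability up front and to pair the $e^{x}$ tilting factor against the balancing identity in the correct direction.
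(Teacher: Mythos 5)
Your proof is correct and is essentially the paper's own argument: the paper verifies directly that the integrand $h(x)=x\,g_\alpha(e^x)\phi_\theta(x)$ satisfies $h(-x)=-h(x)$ by combining the balancing identity \eqref{eq:detailed_bal} with the Gaussian exponent algebra, which is exactly your tilting identity $\phi_\theta(-x)=e^{x}\phi_\theta(x)$ packaged as a reflection $x\mapsto -x$, and both arguments close with the same integrability bound $|x\,g_\alpha(e^x)|\le |x|$. The only blemish is a sign-bookkeeping slip in your prose (the difference of the exponents is $+x$, not $-x$, once the leading minus signs are included), but the identity you actually use is the correct one.
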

    \begin{proof}
        We have,
        \[
            \big|\mathbb{E}\left[Xg_{\alpha}(e^X) \right]\big| \le \mathbb{E}\left[|Xg_{\alpha}(e^X)| \right] \le \mathbb{E}\left[|X| \right] < \infty;
        \]
        the second inequality follows from the assumption that $g_{\alpha}$ lies in [0,1]. Hence, the expectation exists and is equal to the integral,
        \[
            \int_{\mathbb{R}} x\,g_{\alpha}\left( e^{x}\right) \frac{1}{\sqrt{2\pi\theta}} \exp\left \{ \frac{-(x+\theta/2)^2}{2\theta}\right\} dx =: \int_{\mathbb{R}} h(x) dx \,.
        \]
        Observe that, using \eqref{eq:detailed_bal},
        \begin{align*}
            h(-x) &= -x\,g_{\alpha}\left( e^{-x}\right) \frac{1}{\sqrt{2\pi \theta}} \exp\left \{ \frac{-(-x + \theta/2)^2}{2\theta}\right\} \\
            &=  -x\,g_{\alpha}\left( e^{-x}\right) \frac{1}{\sqrt{2\pi\theta}} \exp\left \{ \frac{-1}{2\theta}\left( x^2 + \frac{\theta^2}{4} - x\theta \right)\right\} \\
            &= -x e^{-x} g_{\alpha}(e^x) \frac{1}{\sqrt{2\pi\theta}} \exp\left \{ \frac{-1}{2\theta}\left( x^2 + \frac{\theta^2}{4} - x\theta \right)\right\} \\
            &= -x\,g_{\alpha}\left( e^{x}\right)\frac{1}{\sqrt{2\pi\theta}} \exp\left \{\frac{-1}{2\theta}\left( x^2 + \frac{\theta^2}{4} + x\theta \right)\right\} \\
            &= -x\,g_{\alpha}\left( e^{x}\right) \frac{1}{\sqrt{2\pi\theta}} \exp\left \{ \frac{-(x+\theta/2)^2}{2\theta}\right\} \\
            &= - h(x).
        \end{align*}
       Hence, the result follows.
    \end{proof}

    \begin{lemma}
        \label{lemm:uni2}
        Suppose $V \in C_c^{\infty}$ is restricted to only the first component of $\bs{Z}^d$. Then,
        \[
            \sup_{\bs{x^d} \in F_d} |G_dV(\bs{x}^d) - GV(x_1^d)| \to 0 \text{ as } d \to \infty.
        \]
    \end{lemma}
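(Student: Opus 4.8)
The plan is to verify the two displayed conditions preceding the lemma; Lemma~\ref{lemm:fd} already supplies the first, so the real work is the uniform generator convergence. Fix $V \in C_c^\infty$ depending only on the first coordinate and write $\psi(t) := g_\alpha(e^t)$, which is bounded by $1$ and Lipschitz by \eqref{eq:g_lipschitz}. Using the product form of $\pi_d$ and conditioning, I decompose \eqref{eq:gen} into an iterated expectation, outer over $Y_1 \sim N(x_1^d,\sigma_d^2)$ and inner over the remaining coordinates $\bs{Y}^{d-}$,
\[
 G_d V(\bs{x}^d) = d\,\mathbb{E}_{Y_1}\!\left[(V(Y_1) - V(x_1^d))\, E^{d}(Y_1)\right],
\]
where $E^{d}(Y_1) := \mathbb{E}_{\bs{Y}^{d-}}[\psi(\log(\pi_d(\bs{Y}^d)/\pi_d(\bs{x}^d)))\mid Y_1]$. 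First I would replace $E^d(Y_1)$ by $n_d(\log(f(Y_1)/f(x_1^d)))$, where $n_d(v) := \mathbb{E}[\psi(v + T_d)]$ and $T_d := \sum_{i=2}^d\big(w'(x_i)(Y_i - x_i) - \tfrac{l^2}{2(d-1)}w'(x_i)^2\big)$ is the leading Taylor part of the log-ratio over coordinates $2,\dots,d$. This replacement incurs an error that is uniformly small on $F_d$ by Lipschitz continuity together with Lemma~\ref{lemm:wd} (the step carried out for $g_{\text{MH}}$ inside the proof of Lemma~2.6 of \cite{roberts1997weak} goes through verbatim for Lipschitz $\psi$), and it contributes negligibly to $G_dV$ by the same oddness argument used below for the drift. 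The decisive structural fact is that, conditional on $\bs{x}^d$, $T_d$ is \emph{exactly} Gaussian, $T_d \sim N(\mu_d,\tau_d^2)$ with $\tau_d^2 = \tfrac{l^2}{d-1}\sum_{i=2}^d w'(x_i)^2$ and $\mu_d = -\tau_d^2/2$; on $F_d$ one has $\tau_d^2 \to l^2 I$, so $T_d \Rightarrow N(-l^2I/2,l^2I)$ and hence $n_d(0) = \mathbb{E}[\psi(T_d)] \to M_\alpha(l)$ by bounded convergence, recovering \eqref{eq:mal}.

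Next I would Taylor-expand $V(Y_1) - V(x_1^d) = V'(x_1^d)U + \tfrac12 V''(x_1^d)U^2 + O(U^3)$ with $U := Y_1 - x_1^d$, and treat the terms separately. The second-order term is routine: $d\,\mathbb{E}[U^2] = dl^2/(d-1) \to l^2$, the inner factor tends to $M_\alpha(l)$, and the $O(U^3)$ remainder is negligible since $d\,\mathbb{E}|U|^3 = O(d^{-1/2})$ and $V'''$ is bounded; this gives the contribution $\tfrac12 V''(x_1^d)\,l^2 M_\alpha(l) = \tfrac12 V''(x_1^d) h_\alpha(l)$, matching the diffusion term in \eqref{eq:lang_gen}.

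The crux is the first-order (drift) term $d\,V'(x_1^d)\,\mathbb{E}_{Y_1}[U\,n_d(\log(f(Y_1)/f(x_1^d)))]$. A naive absolute bound fails here, since $d\,\mathbb{E}|U| = O(\sqrt d)$ does not vanish; instead I exploit the oddness of $U$ through Gaussian integration by parts in $Y_1$ (Stein's identity, valid because $\psi$, hence $v \mapsto n_d(v)$, is Lipschitz), converting the expectation into $\sigma_d^2\,\mathbb{E}_{Y_1}[n_d'(\log(f(Y_1)/f(x_1^d)))\,w'(Y_1)]$ and thereby producing the saving factor $\sigma_d^2 = O(1/d)$. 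It then remains to identify $n_d'(0)$: differentiating the Gaussian convolution $n_d$ under the integral and moving the derivative onto the density gives $n_d'(0) = \tau_d^{-2}\,\mathbb{E}[(T_d - \mu_d)\psi(T_d)]$, with \emph{no} derivative of $\psi$ appearing. Since $T_d \sim N(-\tau_d^2/2,\tau_d^2)$, Proposition~\ref{prop:odd} (applied with $\theta = \tau_d^2$) gives $\mathbb{E}[T_d\psi(T_d)] = 0$, whence $n_d'(0) = -\tau_d^{-2}\mu_d\,\mathbb{E}[\psi(T_d)] = \tfrac12\,n_d(0) \to \tfrac12 M_\alpha(l)$. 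Feeding this back and using $Y_1 \to x_1^d$, $w'(Y_1) \to w'(x_1^d)$, $d\sigma_d^2 \to l^2$, the drift term converges to $\tfrac12 w'(x_1^d) V'(x_1^d)\,l^2 M_\alpha(l) = \tfrac12 w'(x_1^d)V'(x_1^d)h_\alpha(l)$, i.e. the drift in \eqref{eq:lang_gen} with $w' = (\log f)'$.

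Summing the two contributions yields $G_d V(\bs{x}^d) \to GV(x_1^d)$, and since every limit above is driven only by $R_d, S_d \to I$, which hold uniformly on $F_d$ by construction, tracking the $F_d$-uniform error bounds throughout gives $\sup_{\bs{x}^d \in F_d}|G_dV(\bs{x}^d) - GV(x_1^d)| \to 0$. I expect the main obstacle to be precisely the drift term: because $g_\alpha$ is only Lipschitz, one cannot differentiate $\psi$ directly, so the exact constant $\tfrac12 M_\alpha(l)$ must be extracted without it — this is exactly what Proposition~\ref{prop:odd} achieves, by forcing $\mathbb{E}[T_d\psi(T_d)] = 0$ through the balancing relation \eqref{eq:detailed_bal}, while the Lipschitz property justifies the integration by parts that supplies the needed $O(1/d)$ cancellation.
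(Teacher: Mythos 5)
Your proof is correct and follows essentially the same route as the paper: the same decomposition of \eqref{eq:gen} into outer ($Y_1^d$) and inner ($\bs{Y}^{d-}$) expectations, the same replacement of the exact log-ratio by its exactly-Gaussian first-order Taylor part (your $n_d$ is the paper's $M_{d,\alpha}$ from \eqref{eq:me}), the same deferral to Lemmas \ref{lemm:fd}--\ref{lemm:uni1} and to Lemma 2.6 of \cite{roberts1997weak} for the uniform error control, and the same key identity $n_d'(0)=\tfrac{1}{2}n_d(0)$ obtained by moving the derivative onto the Gaussian density and invoking Proposition \ref{prop:odd} with $\theta = l^2 R_d$. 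The only difference is cosmetic: you extract the drift via Stein's identity (Gaussian integration by parts) in $Y_1^d$, which supplies the $\sigma_d^2$ factor in one stroke, whereas the paper Taylor-expands the product $\left(V(Y_1^d)-V(x_1^d)\right)N_{d,\alpha}(Y_1^d)$ and reads the drift off the cross term $V'(x_1^d)N_{d,\alpha}'(x_1^d)(Y_1^d-x_1^d)^2$ — the two computations are equivalent and yield the same limit \eqref{eq:lang_gen}.
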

    \begin{proof}
        In the expression for $G_dV(\bs{x}^d)$ given in \eqref{eq:gen}, we can decompose the proposal $\bs{Y}^d$ into $(Y_1^d, \bs{Y}^{d-})$ and thus rewrite the expectation as follows,
        \begin{equation}
            \label{eq:gen1}
            G_dV(\bs{x}^d) = d\mathbb{E}_{Y_1^d}\left[\left(V(Y^d_1) - V(x^d_1)\right)\mathbb{E}_{\bs{Y}^{d-}}\hspace{-.2cm}\left[ \alpha(\bs{x}^d, \bs{Y}^d) \mid Y_1^d \right]\right].
        \end{equation}
        Let $E^{d, \alpha}$ denote the inner expectation in \eqref{eq:gen1} and define $E_{lim}^{d, \alpha}$ as,
        \begin{equation}
        \label{eq:elim}
            \begin{aligned}
            E_{lim}^{d, \alpha} = \mathbb{E}_{\bs{Y}^{d-}} \hspace{-.2cm}\left[ g_{\alpha}\left( \exp\left\{ \log\dfrac{f(Y_1^d)}{f(x_1^d)} + \displaystyle \sum_{i=2}^d\left( w'(x_i^d)(Y_i^d - x_i^d) - \frac{l^2w'(x_i^d)^2}{2(d-1)}\right) \right\} \right) \bigg| Y_1^d \right].
            \end{aligned}
        \end{equation}
        Also, a Taylor series expansion of $w$ about $x_i^d$ for $i = 2, \dots, d$ gives,
        \begin{align*}
            E^{d, \alpha} &= \mathbb{E}_{\bs{Y}^{d-}} \hspace{-.2cm}\left[ g_{\alpha}\left( \exp\left\{ \log\frac{f(Y_1^d)}{f(x_1^d)} + \sum_{i=2}^d w'(x_i^d)(Y_i^d - x_i^d) \right. \right. \right. \\
            &\quad \quad \quad \quad \left. \left. \left. + \frac{1}{2}w''(x_i^d)(Y_i^d - x_i^d)^2 + \frac{1}{6}w'''(Z_i)(Y_i^d - x_i^d)^3 \right\} \right) \bigg| Y_1^d \right] 
        \end{align*}
        for $Z_i$ lying between $x_i^d$ and $Y_i^d$. Hence, the triangle inequality and Lipschitz continuity of $g(e^z)$ gives, for some Lipschitz constant $K < \infty$, 
        \begin{align}
        \label{eq:phid}
            |E^{d, \alpha} - E^{d,\alpha}_{lim}| &\le K\mathbb{E}_{\bs{Y}^{d-}}\hspace{-.2cm}\left[ \left| \sum_{i=2}^d \frac{1}{2}w''(x_i^d)(Y_i^d - x_i^d)^2 + \frac{1}{6}w'''(Z_i)(Y_i^d - x_i^d)^3 + \frac{l^2w'(x_i^d)^2}{2(d-1)} \right| \right] \notag \\
            &\le K\mathbb{E}_{\bs{Y}^{d-}}\hspace{-.2cm}\left[\left|W_d(\bs{x}^d)\right|\right] + K\sup_{z\in \mathbb{R}}|w'''(z)|\frac{l^3}{(d-1)^{1/2}},
        \end{align}    
        where $W_d(\bs{x}^d)$ is as defined in Lemma~\ref{lemm:wd}. From Lemma~\ref{lemm:wd}, Lemma~\ref{lemm:uni1} and \eqref{eq:phid},
        \begin{equation}
        \label{eq:gstar}
            \sup_{\bs{x}^d \in F_d} \left|G_dV(\bs{x}^d) - d\mathbb{E}_{Y_1^d}\left[\left(V(Y^d_1) - V(x^d_1)\right)E_{lim}^{d, \alpha}\right]\right| \to 0 \text{ as } d \to \infty.
        \end{equation}
        Now let $\epsilon(y) = \log f(y) - \log f(x_1^d)$. Also from \eqref{eq:elim}, it is clear that given $\bs{x}^d$, $E^{d,\alpha}_{lim}$ is a function of $Y_1^d$ alone, {\it to wit},
        \begin{equation}
        \label{eq:me}
            (M_{d, \alpha}\circ\epsilon)(Y_1^d) := E^{d,\alpha}_{lim} = \mathbb{E}\left[ g_{\alpha}(e^{B_d})\right],
        \end{equation} 
        where $B_d \sim N(\mu_d, \Sigma_d)$ with $ \mu_d = \epsilon(Y_1^d) - l^2R_d/2$ and $\Sigma_d = l^2R_d$.
        Thus by \eqref{eq:phid}, it is enough to consider the asymptotic behaviour of,
        \[
            d\mathbb{E}_{Y_1^d}\left[\left(V(Y^d_1) - V(x^d_1)\right)M_{d, \alpha}(\epsilon(Y_1^d))\right].
        \]
        Let $N_{d, \alpha} = M_{d, \alpha} \circ \epsilon$ and apply Taylor series expansion on the inner term to obtain,
        \begin{align*}
            &\left(V(Y^d_1) - V(x^d_1)\right)M_{d, \alpha}(\epsilon(Y_1^d)) \\
            &\quad =  \left( V'(x_1^d)(Y_1^d - x_1^d) + \frac{1}{2}V''(x_1^d)(Y_1^d - x_1^d)^2 + \frac{1}{6}V'''(K_d)(Y_1^d - x_1^d)^3\right) \\
            &\quad \quad  \times \left( N_{d, \alpha}(x_1^d) + N'_{d, \alpha}(x_1^d)(Y_1^d - x_1^d) + \frac{1}{2}N''_{d, \alpha}(L_d)(Y_1^d - x_1^d)^2\right)
        \end{align*}
        where $K_d, L_d \in [Y_1^d, x_1^d]$ or $[x_1^d, Y_1^d]$ and,
        \begin{align*}
            N_{d, \alpha}(x_1^d) &= M_{d, \alpha}(\epsilon(x_1^d)) = M_{d, \alpha}\left(\log \frac{f(x_1^d)}{f(x_1^d)}\right) = M_{d, \alpha}(0) \numberthis \label{eq:nd}\\
            N'_{d, \alpha}(x_1^d) &= M'_{d, \alpha}(\epsilon(x_1^d))\epsilon'(x_1^d) = M'_{d, \alpha}(0)w'(x_1^d) \,.
        \end{align*}
        Now, for all $d$,
        \begin{align*}
            M_{d, \alpha}(\epsilon) &= \mathbb{E}\left[ g_{\alpha}(e^{B_d})\right] 
            = \int_{\mathbb{R}} g_{\alpha}(e^b) \frac{1}{\sqrt{2\pi l^2R_d}} \exp\left \{ \frac{-(b - \epsilon + l^2R_d/2)^2}{2l^2R_d}\right\} db. \\
            \text{So, } M_{d, \alpha}(0) &= \int_{\mathbb{R}} g_{\alpha}(e^b) \frac{1}{\sqrt{2\pi l^2R_d}} \exp\left \{ \frac{-(b + l^2R_d/2)^2}{2l^2R_d}\right\} db. \\
            \text{Also, }M_{d, \alpha}'(\epsilon) &= \frac{d}{d\epsilon} \left( \int_{\mathbb{R}} g_{\alpha}(e^b) \frac{1}{\sqrt{2\pi l^2R_d}} \exp\left \{ \frac{-(b - \epsilon + l^2R_d/2)^2}{2l^2R_d}\right\} db \right).
        \end{align*}
        Derivatives and integral are exchanged due to the dominated convergence theorem. So,
        \begin{align*}
            M_{d, \alpha}'(\epsilon) 
            &= \int_{\mathbb{R}} g_{\alpha}(e^b) \frac{1}{\sqrt{2\pi l^2R_d}} \left( \frac{2(b - \epsilon + l^2R_d/2)}{2l^2R_d} \right) \exp\left \{ \frac{-(b - \epsilon + l^2R_d/2)^2}{2l^2R_d}\right\} db. \\
            \text{So, } M'_{d, \alpha}(0) &= \int_{\mathbb{R}} g_{\alpha}(e^b) \frac{1}{\sqrt{2\pi l^2R_d}} \left( \frac{(b + l^2R_d/2)}{l^2R_d} \right) \exp\left \{ \frac{-(b + l^2R_d/2)^2}{2l^2R_d}\right\} db \\
            &= \frac{1}{l^2R_d}\int_{\mathbb{R}} b\,g_{\alpha}(e^b) \frac{1}{\sqrt{2\pi l^2R_d}} \exp\left \{ \frac{-(b + l^2R_d/2)^2}{2l^2R_d}\right\} db \\
            &\hspace{50pt}+ \frac{1}{2}\int_{\mathbb{R}} g_{\alpha}(e^b) \frac{1}{\sqrt{2\pi l^2R_d}}\exp\left \{ \frac{-(b + l^2R_d/2)^2}{2l^2R_d}\right\} db \\
            &= \frac{1}{2} M_{d, \alpha}(0)\,,
        \end{align*}
         where the first term vanishes due to Proposition \ref{prop:odd}. Hence, for all $d$,
        \begin{equation}
            \label{eq:md0}
            2M_{d, \alpha}'(0) = M_{d, \alpha}(0) = \int_{\mathbb{R}} g_{\alpha}(e^b) \frac{1}{\sqrt{2\pi l^2R_d}} \exp\left \{ \frac{-(b + l^2R_d/2)^2}{2l^2R_d}\right\} db.
        \end{equation}
        Now, we plug the expressions obtained above into the Taylor series expansion of $\left(V(Y^d_1) - V(x^d_1)\right)M_{d, \alpha}(\epsilon(Y_1^d))$. The rest of the proof, with the help of Assmuption \ref{assum:f}, follows similarly as in Lemma 2.6, \cite{roberts1997weak}.
    \end{proof}
    \begin{proof}[\bf{Proof of Theorem \ref{thm:main}}]
        From Lemma \ref{lemm:uni2}, we have uniform convergence of generators on the sequence of sets with limiting probability 1. And so by Corollary 8.7, Chapter 4 of \cite{ethier2009}, we have the required result of weak convergence  \citep[the condition that $C_c^{\infty}$ separates points was verified by][]{roberts1997weak}.
    \end{proof}

\section{Proof of Corollary \ref{corr:1}}
    \begin{lemma}
    \label{lemm:elim2}
        Let $E^{d, \alpha}$ be the inner expectation in \eqref{eq:gen1} and $E_{lim}^{d, \alpha}$ be from \eqref{eq:elim}. Then,
        \[ 
            \mathbb{E}_{\bs{\pi}_d} \left[ \mathbb{E}_{Y_1} \left[ E^{d, \alpha} - E_{lim}^{d, \alpha}\bigg| \, \bs{x}^d \right]\right] \to 0 \ \ \ \ \ \text{as } d \to \infty.
        \]
    \end{lemma}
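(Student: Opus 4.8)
The plan is to reduce the claim to the uniform estimate \eqref{eq:phi} established during the proof of Lemma~\ref{lemm:uni2}, by truncating the outer expectation onto the high-probability set $F_d$. Write $D_d := E^{d,\alpha} - E_{lim}^{d,\alpha}$. I would first record two elementary facts. Because $g_{\alpha}$ is $[0,1]$-valued, both $E^{d,\alpha}$ and $E_{lim}^{d,\alpha}$ are conditional expectations of $[0,1]$-valued quantities, so $|D_d| \le 1$ pointwise. Moreover, $F_d$ is determined by $R_d$ and $S_d$, which depend only on $x_2,\dots,x_d$; hence the indicator $\mathbf{1}_{\{\bs{x}^d \in F_d\}}$ is independent of $Y_1$ and may be moved outside the inner $Y_1$-expectation.

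Splitting the outer expectation at $F_d$, I would bound
\begin{align*}
    \left| \mathbb{E}_{\bs{\pi}_d}\mathbb{E}_{Y_1}\!\left[D_d \mid \bs{x}^d\right]\right|
    &\le \mathbb{E}_{\bs{\pi}_d}\!\left[\mathbf{1}_{\{\bs{x}^d\in F_d\}}\,\mathbb{E}_{Y_1}\!\left[|D_d|\mid \bs{x}^d\right]\right] \\
    &\quad + \mathbb{E}_{\bs{\pi}_d}\!\left[\mathbf{1}_{\{\bs{x}^d\notin F_d\}}\,\mathbb{E}_{Y_1}\!\left[|D_d|\mid \bs{x}^d\right]\right].
\end{align*}
For the first summand I would apply \eqref{eq:phi}, which gives $|D_d| \le \varphi(d)$ whenever $\bs{x}^d \in F_d$, so this summand is at most $\varphi(d)\,\mathbb{P}_{\bs{\pi}_d}[\bs{x}^d \in F_d] \le \varphi(d) \to 0$. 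For the second summand I would use $|D_d| \le 1$ to bound it by $\mathbb{P}_{\bs{\pi}_d}[\bs{x}^d \notin F_d]$; since $\bs{Z}_0^d \sim \bs{\pi}_d$, Lemma~\ref{lemm:fd} (at $s=0$) shows $\mathbb{P}_{\bs{\pi}_d}[\bs{x}^d \in F_d] \to 1$, so this summand also vanishes. Adding the two bounds gives the result.

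The one point I would make explicit, and the only place where care is needed, is that $\varphi(d)$ dominates $|D_d|$ uniformly in $Y_1$ rather than merely for each fixed $\bs{x}^d$. This holds because the term $\log(f(Y_1)/f(x_1))$ appears identically in the exponents defining $E^{d,\alpha}$ and $E_{lim}^{d,\alpha}$ and therefore cancels in their difference before the Lipschitz bound on $g_{\alpha}(e^{\cdot})$ is invoked, exactly as in the derivation of \eqref{eq:phi}. Consequently the resulting bound on $|D_d|$ is free of $Y_1$, which is what allows the first summand to be controlled by $\varphi(d)$ alone rather than by an expression still involving an integral over $Y_1$. The remainder is a routine truncation argument resting only on boundedness of $g_{\alpha}$ and on the concentration of $\bs{\pi}_d$ on $F_d$.
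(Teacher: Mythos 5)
Your proof is correct and follows essentially the same route as the paper: split the outer expectation on the event $\{\bs{x}^d \in F_d\}$, control the on-$F_d$ term by the uniform bound \eqref{eq:phi}, and control the off-$F_d$ term by boundedness of $g_{\alpha}$ together with Lemma~\ref{lemm:fd}. Your explicit observation that the $\log(f(Y_1)/f(x_1))$ term cancels in $E^{d,\alpha} - E_{lim}^{d,\alpha}$, so that the bound $\varphi(d)$ is uniform in $Y_1$, is a subtlety the paper leaves implicit and is worth making.
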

    \begin{proof}
        Consider,
        \begin{align*}
            \left| \mathbb{E}_{\bs{\pi}_d} \left[ \mathbb{E}_{Y_1^d} \left[ E^{d, \alpha} - E_{lim}^{d, \alpha}\bigg| \, \bs{x}^d \right]\right] \right|  
            &\le \left| \mathbb{E}_{\bs{\pi}_d} \left[ \mathbb{E}_{Y_1^d} \left[ E^{d, \alpha} - E_{lim}^{d, \alpha}\bigg| \, \bs{x}^d \in F_d \right]\right] P(\bs{x}^d \in F_d) \right| \\
            & \quad + \left| \mathbb{E}_{\bs{\pi}_d} \left[ \mathbb{E}_{Y_1^d} \left[ E^{d, \alpha} - E_{lim}^{d, \alpha}\bigg| \, \bs{x}^d \in F^C_d \right]\right] P(\bs{x}^d \in F^C_d) \right|.
        \end{align*}
        Second term goes to $0$ since the expectation is bounded and by construction $P(\bs{x}^d \in F_d^C)\to 0$ as $d \to \infty$. Also, following \cite{roberts1997weak},
        \begin{equation*}
            \sup_{\bs{x}^d \in F_d}|E^{d, \alpha} - E_{lim}^{d, \alpha}| \to 0 \text{ as } d \to \infty.
        \end{equation*}
        Then,
        \begin{align*}
            &\left| \mathbb{E}_{\bs{\pi}_d} \hspace{-.1cm}\left[ \mathbb{E}_{Y_1^d} \hspace{-.1cm}\left[ E^{d, \alpha} - E_{lim}^{d, \alpha}\bigg| \, \bs{x}^d \in F_d \right]\right]  P(\bs{x}^d \in F_d) \right| \\
            & \quad\le \mathbb{E}_{\bs{\pi}_d} \left[ \mathbb{E}_{Y_1^d} \left[\sup_{\bs{x^d} \in F_d}\left| E^{d, \alpha} - E_{lim}^{d, \alpha}\right|\bigg| \, \bs{x}^d \in F_d \right]\right] \to 0\,.
        \end{align*}
    \end{proof}
    \begin{proof}[{\bf Proof of Corollary \ref{corr:1}}]
        Consider equation \eqref{eq:me}. Using Taylor series approximation of second order around $x_1$,
        \begin{equation*}
            \mathbb{E}_{Y_1^d}[E_{lim}^{d, \alpha}] = \mathbb{E}[N_{d, \alpha}(Y_1^d)] = N_{d, \alpha}(x_1^d) +  \frac{1}{2}N''_{d, \alpha}(W_{d,1})\frac{l^2}{d-1}\,.
        \end{equation*}
        where $W_{d,1} \in [x_1^d, Y_1^d]$ or $[Y_1^d, x_1^d]$. 
        %
        Since $N''$ is bounded \citep{roberts1997weak}, 
        \begin{align*}
            \alpha(l) 
            &= \lim_{d \to \infty}  \mathbb{E}_{\bs{\pi}_d} \left[ \mathbb{E}_{Y_1^d} \left[ \mathbb{E}_{\bs{Y}^{d-}}\left[\alpha(\bs{X}^d, \bs{Y}^d) \bigg| Y_1^d, \bs{x}^d\right] \bigg| \, \bs{x}^d \right]\right] \\
            &= \lim_{d \to \infty}  \mathbb{E}_{\bs{\pi}_d} \left[ \mathbb{E}_{Y_1^d} \left[E_{lim}^{d, \alpha} + E^{d, \alpha} - E_{lim}^{d, \alpha}\bigg| \bs{x}^d \right]\right].
        \end{align*}
        As all expectations exist, we can split the inner expectation and use Lemma \ref{lemm:elim2}, so that
        \begin{align*}
            \alpha(l) &= \lim_{d \to \infty}  \mathbb{E}_{\bs{\pi}_d} \left[ \mathbb{E}_{Y_1^d} \left[E_{lim}^{d, \alpha} \bigg| \bs{x}^d \right]\right] + \lim_{d \to \infty}  \mathbb{E}_{\bs{\pi}_d} \left[ \mathbb{E}_{Y_1^d} \left[ E^{d, \alpha} - E_{lim}^{d, \alpha}\bigg| \, \bs{x}^d \right]\right]\\
            &= \lim_{d \to \infty}  \mathbb{E}_{\bs{\pi}_d} \left[ M_{d, \alpha}(0) + \frac{1}{2}N''_{d, \alpha}(W_{d,1})\frac{l^2}{d-1}\right] \\
            &= \lim_{d \to \infty}  \mathbb{E}_{\bs{\pi}_d} \left[\int_{\mathbb{R}} g_{\alpha}(e^b) \frac{1}{\sqrt{2\pi l^2R_d}} \exp\left \{ \frac{-(b + l^2R_d/2)^2}{2l^2R_d}\right\} db \right] \\
            &= \int_{\mathbb{R}} g_{\alpha}(e^b) \frac{1}{\sqrt{2\pi l^2I}} \exp\left \{ \frac{-(b + l^2I/2)^2}{2l^2I}\right\} db = M_{\alpha}(l)\,.
        \end{align*}
        The last equality is by the law of large numbers and continuous mapping theorem. 
    \end{proof}

\section{Optimizing speed for Barker's acceptance} 
\label{sec:speed}
    We need to maximise $h_{\text{B}}(l) = l^2M_{\text{B}}(l)$.
   Let $I$ be fixed arbitrarily. 
    \[
        h_{\text{B}}(l) = \frac{1}{I}\cdot l^2I\cdot \int_{\mathbb{R}} \frac{1}{1 + e^{-b}} \frac{1}{\sqrt{2\pi l^2I}} \exp\left \{ \frac{-(b + l^2I/2)^2}{2l^2I}\right\} db.
    \]
    For a fixed $I$, we can reparametrize the function by taking $\theta = l^2I$ and so maximizing $h_{\text{B}}(l)$ over positive $l$ will be equivalent to maximizing $h^1_{\text{B}}(\theta)$ over positive $\theta$ where,
    \begin{equation*}
       h^1_{\text{B}}(\theta) = \int_{\mathbb{R}} \frac{\theta}{1 + e^{-b}} \frac{1}{\sqrt{2\pi \theta}} \exp\left \{ \frac{-(b + \theta/2)^2}{2\theta}\right\} db.     
    \end{equation*}
    We make the substitution $z = (b + \theta/2)/\sqrt{\theta}$ in the integrand 
    to obtain 
    \[
        h_{\text{B}}^1(\theta) = \int_{\mathbb{R}} \frac{\theta}{1 + \exp\{-z\sqrt{\theta} + \theta/2\}} \frac{1}{\sqrt{2\pi}} e^{-z^2/2} dz = \mathbb{E}\left[ \frac{\theta}{1 + \exp\{-Z\sqrt{\theta} + \theta/2\}} \right],
    \]
    where the expectation is taken with respect to $Z \sim N(0,1)$. This expectation however is not available in closed form. However standard numerical integration routines yield the optimal value of $\theta$ to be $6.028$. This implies that the optimal value of $l$, say $l^*$, is approximately equal to,
    \[
        l^* \approx \frac{2.46}{\sqrt{I}} \ \ (\text{up to 2 decimal places}). 
    \]
    Using this $l^*$ yields an asymptotically optimal acceptance rate of approximately $0.158$.

\section{Bernoulli factory}
\label{sec:BF}

To sample events of probability $\alpha_B$, the \textit{two-coin} algorithm, an efficient Bernoulli factory, was presented in \cite{gonccalves2017b}. Generalizing this to a \textit{die-coin} algorithm, we present a Bernoulli factory for $\alpha_r^{\text{R}}$ for $r =2$; extensions to other $r$ can be done similarly. Let $\pi(x) = c_x p_x$ with $p_x \in [0,1]$ and $c_x > 0$. Then,
\[ 
\alpha_2^{\text{R}}(x,y)  =  \dfrac{\pi(y)^2 + \pi(x) \pi(y)}{\pi(y)^2 + \pi(x) \pi(y) + \pi(x)^2} = \dfrac{c_y^2p_y^2 + c_xp_x c_yp_y}{c_y^2p_y^2 + c_xp_x c_yp_y + c_x^2p_x^2}\,.
\]
\begin{algorithm}[htbp]
\caption{Die-coin algorithm for $\alpha_2^{\text{R}}(x,y)$}\label{alg:alpha2}
\begin{algorithmic}[1]
\State Draw $D \sim $ Categorical$\left(\dfrac{c_y^2}{c_x^2 +c_xc_y +c_y^2}, \dfrac{c_xc_y}{c_y^2 +c_xc_y +c_x^2}, \dfrac{c_x^2}{c_y^2 +c_xc_y +c_x^2}  \right)$
\If {$D = 1$} 
\State Draw $C_1 \sim \text{Bern}(p_y^2)$
\If {$C_1 = 1$} {output 1} {\textbf{else} go back to Step 1}
\EndIf
\EndIf
\If {$D = 2$} 
\State Draw $C_1 \sim \text{Bern}(p_xp_y)$
\If {$C_1 = 1$} {output 1} {\textbf{else} go back to Step 1}
\EndIf
\EndIf
\If {$D = 3$} 
\State Draw $C_1 \sim \text{Bern}(p_x^2)$
\If {$C_1 = 1$} {output 0} {\textbf{else} go back to Step 1}
\EndIf
\EndIf
\end{algorithmic}
\end{algorithm}

\bibliographystyle{apalike}
\bibliography{references}

\begin{thebibliography}{}

\bibitem[Banterle et~al., 2019]{banterle2019accelerating}
Banterle, M., Grazian, C., Lee, A., and Robert, C.~P. (2019).
\newblock Accelerating {M}etropolis-{H}astings algorithms by delayed
  acceptance.
\newblock {\em Foundations of Data Science}, 1(2):103--128.

\bibitem[Barker, 1965]{barker1965}
Barker, A.~A. (1965).
\newblock {M}onte {C}arlo calculations of the radial distribution functions for
  a proton-electron plasma.
\newblock {\em Australian Journal of Physics}, 18(2):119--134.

\bibitem[B{\'e}dard, 2008]{bedard2008}
B{\'e}dard, M. (2008).
\newblock Optimal acceptance rates for {M}etropolis algorithms: {M}oving beyond
  0.234.
\newblock {\em Stochastic Processes and their Applications},
  118(12):2198--2222.

\bibitem[Billera and Diaconis, 2001]{billera2001}
Billera, L.~J. and Diaconis, P. (2001).
\newblock A geometric interpretation of the {M}etropolis-{H}astings algorithm.
\newblock {\em Statistical Science}, pages 335--339.

\bibitem[Brooks et~al., 2011]{brooks2011}
Brooks, S., Gelman, A., Jones, G., and Meng, X.-L. (2011).
\newblock {\em Handbook of {M}arkov chain {M}onte {C}arlo}.
\newblock CRC press.

\bibitem[Christensen et~al., 2005]{christensen2005}
Christensen, O.~F., Roberts, G.~O., and Rosenthal, J.~S. (2005).
\newblock Scaling limits for the transient phase of local
  {M}etropolis--{H}astings algorithms.
\newblock {\em Journal of the Royal Statistical Society: Series B (Statistical
  Methodology)}, 67(2):253--268.

\bibitem[Delmas and Jourdain, 2009]{del:jour:2009}
Delmas, J.-F. and Jourdain, B. (2009).
\newblock Does waste recycling really improve the multi-proposal
  {M}etropolis--{H}astings algorithm? {A}n analysis based on control variates.
\newblock {\em Journal of Applied Probability}, 46:938--959.

\bibitem[Doucet et~al., 2015]{doucet2015efficient}
Doucet, A., Pitt, M.~K., Deligiannidis, G., and Kohn, R. (2015).
\newblock Efficient implementation of {M}arkov chain {M}onte {C}arlo when using
  an unbiased likelihood estimator.
\newblock {\em Biometrika}, 102(2):295--313.

\bibitem[Duane et~al., 1987]{duaneHMC1987}
Duane, S., Kennedy, A.~D., Pendleton, B.~J., and Roweth, D. (1987).
\newblock Hybrid {M}onte {C}arlo.
\newblock {\em Physics Letters B}, 195(2):216--222.

\bibitem[Ethier and Kurtz, 1986]{ethier2009}
Ethier, S.~N. and Kurtz, T.~G. (1986).
\newblock {\em Markov processes: {C}haracterization and convergence}.
\newblock John Wiley \& Sons.

\bibitem[Gelman et~al., 1996]{gelman1996}
Gelman, A., Roberts, G.~O., and Gilks, W.~R. (1996).
\newblock Efficient {M}etropolis jumping rules.
\newblock {\em Bayesian Statistics}, 5:599--608.

\bibitem[Gon{\c{c}}alves et~al., 2017a]{gonccalves2017a}
Gon{\c{c}}alves, F.~B., {\L}atuszy{\'n}ski, K., and Roberts, G.~O. (2017a).
\newblock Barker's algorithm for {B}ayesian inference with intractable
  likelihoods.
\newblock {\em Brazilian Journal of Probability and Statistics},
  31(4):732--745.

\bibitem[Gon{\c{c}}alves et~al., 2017b]{gonccalves2017b}
Gon{\c{c}}alves, F.~B., {\L}atuszy{\'n}ski, K., and Roberts, G.~O. (2017b).
\newblock Exact {M}onte {C}arlo likelihood-based inference for jump-diffusion
  processes.
\newblock {\em arXiv preprint arXiv:1707.00332}.

\bibitem[Hastings, 1970]{hastings1970}
Hastings, W.~K. (1970).
\newblock {M}onte {C}arlo sampling methods using {M}arkov chains and their
  applications.
\newblock {\em Biometrika}, 57(1):97--109.

\bibitem[Herbei and Berliner, 2014]{herbei:berliner:2014}
Herbei, R. and Berliner, L.~M. (2014).
\newblock Estimating ocean circulation: an {MCMC} approach with approximated
  likelihoods via the {B}ernoulli factory.
\newblock {\em Journal of the American Statistical Association}, 109:944--954.

\bibitem[Jourdain et~al., 2014]{jourdain2014}
Jourdain, B., Leli{\`e}vre, T., and Miasojedow, B. (2014).
\newblock Optimal scaling for the transient phase of {M}etropolis {H}astings
  algorithms: {T}he longtime behavior.
\newblock {\em Bernoulli}, 20:1930--1978.

\bibitem[Kuntz et~al., 2019]{kuntz2019}
Kuntz, J., Ottobre, M., and Stuart, A.~M. (2019).
\newblock Diffusion limit for the random walk {M}etropolis algorithm out of
  stationarity.
\newblock In {\em Annales de l'Institut Henri Poincar{\'e}, Probabilit{\'e}s et
  Statistiques}, volume~55, pages 1599--1648. Institut Henri Poincar{\'e}.

\bibitem[{\L}atuszy{\'n}ski and Roberts, 2013]{latuszynski2013}
{\L}atuszy{\'n}ski, K. and Roberts, G.~O. (2013).
\newblock C{LT}s and asymptotic variance of time-sampled {M}arkov chains.
\newblock {\em Methodology and Computing in Applied Probability},
  15(1):237--247.

\bibitem[Menezes and Kabamba, 2014]{menezes2014}
Menezes, A.~A. and Kabamba, P.~T. (2014).
\newblock Optimal search efficiency of {B}arker's algorithm with an exponential
  fitness function.
\newblock {\em Optimization Letters}, 8(2):691--703.

\bibitem[Metropolis et~al., 1953]{metropolis1953}
Metropolis, N., Rosenbluth, A.~W., Rosenbluth, M.~N., Teller, A.~H., and
  Teller, E. (1953).
\newblock Equation of state calculations by fast computing machines.
\newblock {\em The {J}ournal of {C}hemical {P}hysics}, 21(6):1087--1092.

\bibitem[Meyn and Tweedie, 2012]{meyn2012}
Meyn, S.~P. and Tweedie, R.~L. (2012).
\newblock {\em Markov chains and {S}tochastic stability}.
\newblock Springer Science \& Business Media.

\bibitem[Mira, 2001]{mira2001}
Mira, A. (2001).
\newblock On {M}etropolis-{H}astings algorithms with delayed rejection.
\newblock {\em Metron}, 59(3-4):231--241.

\bibitem[Morina et~al., 2021]{morina2019bernoulli}
Morina, G., {\L}atuszy\'{n}ski, K., Nayar, P., and Wendland, A. (2021).
\newblock From the {B}ernoulli factory to a dice enterprise via perfect
  sampling of {M}arkov chains.
\newblock {\em Annals of Applied Probability, \rm{to appear.}}

\bibitem[Neal and Roberts, 2006]{neal:optimal:2006}
Neal, P. and Roberts, G.~O. (2006).
\newblock Optimal scaling for partially updating {MCMC} algorithms.
\newblock {\em The Annals of Applied Probability}, 16:475--515.

\bibitem[Peskun, 1973]{peskun1973}
Peskun, P.~H. (1973).
\newblock Optimum {M}onte-{C}arlo sampling using {M}arkov chains.
\newblock {\em Biometrika}, 60(3):607--612.

\bibitem[Robert and Casella, 2013]{robert2013}
Robert, C. and Casella, G. (2013).
\newblock {\em Monte {C}arlo {S}tatistical {M}ethods}.
\newblock Springer Science \& Business Media.

\bibitem[Roberts et~al., 1997]{roberts1997weak}
Roberts, G.~O., Gelman, A., and Gilks, W.~R. (1997).
\newblock {W}eak convergence and optimal scaling of random walk {M}etropolis
  algorithms.
\newblock {\em The Annals of Applied Probability}, 7(1):110--120.

\bibitem[Roberts and Rosenthal, 1998]{roberts1998}
Roberts, G.~O. and Rosenthal, J.~S. (1998).
\newblock Optimal scaling of discrete approximations to {L}angevin diffusions.
\newblock {\em Journal of the Royal Statistical Society: Series B (Statistical
  Methodology)}, 60(1):255--268.

\bibitem[Roberts and Rosenthal, 2001]{roberts2001}
Roberts, G.~O. and Rosenthal, J.~S. (2001).
\newblock Optimal scaling for various {M}etropolis-{H}astings algorithms.
\newblock {\em Statistical Science}, 16(4):351--367.

\bibitem[Roberts and Rosenthal, 2009]{roberts2009examples}
Roberts, G.~O. and Rosenthal, J.~S. (2009).
\newblock Examples of adaptive {MCMC}.
\newblock {\em Journal of Computational and Graphical Statistics}, 18:349--367.

\bibitem[Roberts and Tweedie, 1996]{roberts1996}
Roberts, G.~O. and Tweedie, R.~L. (1996).
\newblock Exponential convergence of {L}angevin distributions and their
  discrete approximations.
\newblock {\em Bernoulli}, 2:341--363.

\bibitem[Schmon et~al., 2021]{schmon2021large}
Schmon, S.~M., Deligiannidis, G., Doucet, A., and Pitt, M.~K. (2021).
\newblock Large-sample asymptotics of the pseudo-marginal method.
\newblock {\em Biometrika}, 108:37--51.

\bibitem[Schmon and Gagnon, 2021]{schmon2021optimal}
Schmon, S.~M. and Gagnon, P. (2021).
\newblock Optimal scaling of random walk {M}etropolis algorithms using
  {B}ayesian large-sample asymptotics.
\newblock {\em arXiv preprint arXiv:2104.06384}.

\bibitem[Sherlock and Roberts, 2009]{sher:rob:2009}
Sherlock, C. and Roberts, G.~O. (2009).
\newblock Optimal scaling of the random walk {M}etropolis on elliptically
  symmetric unimodal targets.
\newblock {\em Bernoulli}, 15:774--798.

\bibitem[Sherlock et~al., 2021]{sherlock2015}
Sherlock, C., Thiery, A.~H., and Golightly, A. (2021).
\newblock Efficiency of delayed-acceptance random walk {M}etropolis algorithms.
\newblock {\em The Annals of Statistics}, 49(5):2972--2990.

\bibitem[Sherlock et~al., 2015]{sherlock2015efficiency}
Sherlock, C., Thiery, A.~H., Roberts, G.~O., and Rosenthal, J.~S. (2015).
\newblock On the efficiency of pseudo-marginal random walk {M}etropolis
  algorithms.
\newblock {\em Annals of Statistics}, 43(1):238--275.

\bibitem[Smith, 2018]{smith:2018}
Smith, C.~J. (2018).
\newblock {\em Exact {M}arkov Chain {M}onte {C}arlo with Likelihood
  Approximations for Functional Linear Models}.
\newblock PhD thesis, The Ohio State University.

\bibitem[Vats et~al., 2019]{vats2019}
Vats, D., Flegal, J.~M., and Jones, G.~L. (2019).
\newblock Multivariate output analysis for {M}arkov chain {M}onte {C}arlo.
\newblock {\em Biometrika}, 106:321--337.

\bibitem[Vats et~al., 2021]{vats2020}
Vats, D., Gon{\c c}alves, F.~B., {\L}atuszy{\'n}ski, K., and Roberts, G.~O.
  (2021).
\newblock {Efficient {B}ernoulli factory {M}arkov chain {M}onte Carlo for
  intractable posteriors}.
\newblock {\em Biometrika}.
\newblock asab031.

\bibitem[Yang et~al., 2020]{yang2020}
Yang, J., Roberts, G.~O., and Rosenthal, J.~S. (2020).
\newblock Optimal scaling of random-walk {M}etropolis algorithms on general
  target distributions.
\newblock {\em Stochastic Processes and their Applications}, 130(10):6094 --
  6132.

\bibitem[Zanella et~al., 2017]{zanella2017dirichlet}
Zanella, G., B{\'e}dard, M., and Kendall, W.~S. (2017).
\newblock A {D}irichlet form approach to {MCMC} optimal scaling.
\newblock {\em Stochastic Processes and their Applications}, 127:4053--4082.

\end{thebibliography}

\end{document}